\newtheorem{theorem}{Theorem}
\newtheorem{proposition}[theorem]{Proposition}
\newtheorem{corollary}[theorem]{Corollary}
\newtheorem{definition}[theorem]{Definition}
\newcommand{\figcaption}[1]{\caption{\small #1}}
\theoremstyle{remark}
\newcommand{\tr}{{\operatorname{tr}}}
\newcommand{\ket}[1]{\left|#1\right\rangle}
\newcommand{\bra}[1]{\left\langle#1\right|}
\newcommand{\wsize}{{\rm{wsize}}}
\newcommand{\cT}{\mathcal T}
\newcommand{\cS}{\mathcal S}
\newcommand{\cG}{\mathcal G}
\newcommand{\cH}{\mathcal H}
\newcommand{\E}{\mathbb E}
\newcommand{\black}{{\rm{black}}}
\newcommand{\red}{{\rm{red}}}
\title{Quantum Speedup Based on Classical Decision Trees
}
\author[1]{Salman Beigi} 
\affil[1]{\it \small School of Mathematics, Institute for Research in Fundamental Sciences (IPM), Tehran, Iran}
\author[2]{Leila Taghavi}
\affil[2]{\it \small School of Computer Science, Institute for Research in Fundamental Sciences (IPM), Tehran, Iran}
\date{}
\begin{document}

\maketitle

\begin{abstract}
Lin and Lin~\cite{LL16} have recently shown how starting with a classical query algorithm (decision tree) for a function, we may find upper bounds on  its quantum query complexity. More precisely, they have shown that given a decision tree for a function $f:\{0,1\}^n\to[m]$ whose input can be accessed via queries to its bits, and a \emph{guessing algorithm} that predicts answers to the queries, there is a quantum query algorithm for $f$ which makes at most $O(\sqrt{GT})$ quantum queries where $T$ is the depth of the decision tree and $G$ is the maximum number of mistakes of the guessing algorithm. 
In this paper we give a simple proof of and generalize this result for functions $f:[\ell]^n \to [m]$ with non-binary input as well as output alphabets. 
Our main tool for this generalization is non-binary span program which has recently been developed for non-binary functions, and the dual adversary bound. As applications of our main result we present several quantum query upper bounds, some of which are new. In particular, we show that  topological sorting of vertices of a directed graph $\cG$ can be done with $O(n^{3/2})$ quantum queries in the adjacency matrix model. Also, we show that the quantum query complexity of the maximum bipartite matching is upper bounded by $O(n^{3/4}\sqrt {m + n})$ in the adjacency list model. 
\end{abstract}

\section{Introduction}
Query complexity of a function $f:[\ell]^n\to [m]$ is the minimum number of adaptive queries to its input bits required to compute the output of the function.  In a quantum query algorithm we allow to make queries in superposition, which sometimes improves the query complexity, e.g., in Grover's search algorithm~\cite{Grover96}.  

Lin and Lin~\cite{LL16} have recently shown that surprisingly sometimes classical query algorithms may result in improved quantum query algorithms. They showed that 
having a classical query algorithm with query complexity $T$ for some function $f:\{0,1\}^n\to[m]$, together with a \emph{guessing algorithm} that at each step predicts the value of the queried bit and makes no more than $G$ mistakes, the quantum query complexity of $f$ is at most $Q(f)=O(\sqrt{GT})$. For instance, the trivial classical algorithm for the search problem which queries the input bits one by one have query complexity $T=n$, and the guessing algorithm which always predicts the output $0$ makes at most $G=1$ mistakes (because making a mistake is equivalent to finding an input bit $1$ which solves the search problem). Thus the quantum query complexity of the search problem is $O(\sqrt{GT}) = O(\sqrt{n})$ recovering Grover's result.

There are two proofs of the above result in~\cite{LL16}. One of the proofs is based on the notion of  \emph{bomb query complexity} $B(f)$. Lin and Lin show that there exists a bomb query algorithm that computes $f$ using $O(GT)$ queries, and that the bomb query complexity equals the square of the quantum query complexity, i.e., $B(f)=\Theta(Q(f)^2)$, which together give $Q(f) = O(\sqrt{GT})$. In the second proof, they design a quantum query algorithm with query complexity $O(\sqrt{TG})$ for $f$ using Grover's search; in computing the function they use the values of predicted queries instead of the real values and use a modified version of Grover's search to find mistakes of the guessing algorithm. 

\paragraph{Our results:} In this paper we give a simple proof of the above result based on the method of  \emph{non-binary span program} that has recently been development by the authors~\cite{BT18}. Then inspired by this proof, we generalize Lin and Lin's result for functions $f:[\ell]^n\to [m]$ with non-binary input as well as non-binary output alphabets. Our proof of this generalization is based on the dual adversary bound which is another equivalent characterization of the quantum query complexity~\cite{LMRSS11}.

  
As an application of our main result we show that given query access to edges of a directed and acyclic graph $\cG$ in the \emph{adjacency matrix model}, 
the vertices of $\cG$ can be sorted with  $O(n^{3/2})$ quantum queries to its edges. We also show that given a directed graph $\cG$ and a vertex $v\in V(\cG)$, the quantum query complexity of determining the length of the smallest directed cycle in $\cG$ containing $v$ is $\Theta(n^{3/2})$. Moreover,  we show that given an undirected graph $\cG$, a vertex $v$ and some constant $k>0$, the quantum query complexity of deciding whether $\cG$ has a cycle of length $k$ containing the vertex $v$ is $O(n^{3/2})$. Furthermore, we show that some existing results on the quantum query complexity of graph theoretic problems such as directed $st$-connectivity, detecting bipartite graphs, finding strongly connected components, and deciding forests can easily be derived from our results. 

Our main result is also useful when dealing with graph problems in the \emph{adjacency list model}. In this regard,  we show that given query access to the adjacency list of an unweighted bipartite graph $\cG$, the quantum query complexity of finding a maximum bipartite matching in $\cG$ is $O(n^{3/4}\sqrt{m +n})$, where $m$ is the number of edges of the graph. To the authors' knowledge this is the first non-trivial upper bound for this problem. 

\section{Preliminaries}
In this section we review the notions of the dual adversary bound and the non-binary span program that will be used for the proof of our main result. 
In this paper we use Dirac's ket-bra notation, so $\ket v$ is a complex (column) vector whose conjugate transpose is denotes by $\bra{v}$. Then, $\langle v| w\rangle$ is the inner product of vectors $\ket v, \ket w$. 
For a matrix $A$, we denote  by $\|A\|$ the operator norm of $A$, i.e., the maximum singular value of $A$. We use
$[\ell]$  to denote the $\ell$-element set $\{0, \ldots, \ell-1\}.$
We also use the Kronecker delta symbol $\delta_{a,b}$ which equals $1$ if $a=b$ and equals $0$ otherwise.

\subsection{Query algorithms}\label{ssec:queryAlg}
In the query model we deal with the problem of computing a function $f:D_f\to[m]$ with domain $D_f\subseteq [\ell]^n$ by quering coordinates of the input $x= (x_1, \dots, x_n)\in D_f\subseteq  [\ell]^n$. 
In the classical setting a query algorithm asks the value of some coordinate of the input and based on the answer to that query decides what to do next: either asks another query or outputs the result. Such an algorithm can be modeled by a \emph{decision tree} whose internal vertices are associated with queries, i.e., indices $1\leq j\leq n$, and whose edges correspond to answers to queries, i.e., elements of $[\ell]$. At each vertex the algorithm queries the associated index, and then moves to the next vertex via the edge whose label equals the answer to that query. The algorithm ends once we reach the leaves of the tree that are labeled by elements of $[m]$, the output set of the function. The query complexity of the algorithm is the maximum number of queries in the algorithm over all $x\in D_f$, which is equal to the height of the decision tree.
A randomized classical query algorithm can similarly be modeled by a collection of decision trees where one of them is chosen at random.

In contrast in quantum query algorithms, a query can be made in superposition. Such a query to an input $x$ can be modeled by the unitary operator $O_x$:
\begin{equation*}
O_x|j,p\rangle=|j,(x_j+p) \mod \ell \rangle,
\end{equation*}
where the first register contains the query index $1\leq j\leq n$, and the second register saves the value of $x_j$ in a reversible manner. Therefore, a quantum query algorithm for computing $f(x)$ is an alternation of unitaries $O_x$ and some $U_i$'s that are independent of $x$ (but depend on $f$ itself). Indeed, a quantum query algorithm consists of sequence of unitaries
\begin{equation*}
U_kO_x\ldots U_2O_xU_1,
\end{equation*}
followed by a measurement which determines the outcome of the algorithm. We say that an algorithm computes $f$, if for every $x\in D_f\subseteq [\ell]^n$ the algorithm outputs  $f(x)$ with probability at least $2/3$.
The query complexity of such an algorithm is the number of queries, i.e., the number of $O_x$'s in the sequence of unitaries. $Q(f)$ denotes the \emph{quantum query complexity} of $f$, which is the minimum query complexity among all quantum algorithms that compute $f$.


\subsection{ Dual adversary bound}\label{ADVbound and dual}
The \emph{generalized adversary bound}~\cite{HLS07} gives a lower bound on the quantum query complexity of any function $f:D_f\to [m]$ with $D_f\subseteq [\ell]^n$. This bound can be obtained via a semi-definite program (SDP) whose optimal value, based on the duality of SDPs, has been shown to be equal to that of the following SDP up to a  factor of at most 2~\cite{LMRS10}.
\begin{subequations}\label{SDP:dual-SDP}
\begin{align} 
\min &\quad \max_{x\in D_f}\; \max\bigg\{\sum_{j=1}^n \big\| |u_{xj}\rangle \big\| ^2, \sum_{j=1}^n \big\| |w_{xj}\rangle \big\| ^2\bigg\}\\
\text{subject to} & \quad \sum_{j: x_j\neq y_j} \langle u_{xj}|w_{yj}\rangle=1-\delta_{f(x),f(y)} \qquad \forall x,y\in D_f.
\end{align}
\end{subequations}
Here the optimization is over vectors $|u_{xj}\rangle,|w_{xj}\rangle $. This SDP is called  
\emph{the dual adversary bound} and is proved by Lee et al.~\cite{LMRSS11} to be an upper bound on quantum query complexity of the function $f$ as well. Thus, the above SDP characterizes the quantum query complexity of $f$ up to a constant factor. Moreover, in order to design quantum query algorithms and quantum query complexity upper bounds, it is enough to find a feasible solution of the SDP~\eqref{SDP:dual-SDP}. 

Function evaluation is a special case of a more general problem called \textit{state generation}~\cite{Shi02, AMRR11}. In the state generation problem, the goal is to generate a state $\ket{\psi_x}$ (which depends on $x$) up to a constant error, given query access to $x\in D$. That is, the quantum query algorithm is required to output some state $\rho_x$ such that $\|\rho_x- \ket{\psi_x}\bra{\psi_x}\|_{\tr}\leq 0.1$ where $\|\cdot\|_{\tr}$ denotes the trace distance. 
Of course, the function evaluation problem is a special case of the state generation problem in which $\ket{\psi_x} = \ket{f(x)}$. It has been shown in~\cite{LMRSS11} that a generalization of the SDP~\eqref{SDP:dual-SDP} characterizes the quantum query complexity of the state generation problem up to a constant factor. This generalized SDP, again called the dual adversary bound, is as follows: 
\begin{subequations}\label{SDP:state-generation-SDP}
\begin{align}
\min &\quad \max_{x\in D}\; \max\bigg\{\sum_{j=1}^n \big\| |u_{xj}\rangle \big\| ^2, \sum_{j=1}^n \big\| |w_{xj}\rangle \big\| ^2\bigg\}\\
\text{subject to} & \quad \sum_{j: x_j\neq y_j} \langle u_{xj}|w_{yj}\rangle=1-\langle\psi_x\ket{\psi_y} \qquad \forall x,y\in D,
\end{align}
\end{subequations}
where again  the optimization is over vectors $|u_{xj}\rangle,|w_{xj}\rangle $. Observe that this SDP depends only on the gram matrix $\big(\langle\psi_x\ket{\psi_y} \big)_{x,y}$ of the target vectors. Moreover, letting $\ket{\psi_x} = \ket{f(x)}$, for some function $f$, we recover~\eqref{SDP:dual-SDP}.

\subsection{Non-binary span program}\label{sec:NBSP}

Span program introduced by~\cite{Rei09} is another algebraic tool that similar to the dual adversary bound, characterizes the quantum query complexity of binary functions up to a constant factor. This model has been used for designing quantum query algorithms of binary decision functions by \v{S}palek and Reichardt~\cite{RS12}. The notion of span program was generalized for functions with non-binary inputs in~\cite{ItoJeffery15}. Later, it was further generalized for arbitrary non-binary functions with non-binary input/output alphabets~\cite{BT18}.  In this paper we use a special form of non-binary span program of~\cite{BT18} called non-binary span program \emph{with orthogonal inputs}, which characterizes the quantum query complexity of any functions $f:[l]^n\to [m]$ up to a factor of $\sqrt{\ell-1}$. Here since we will use non-binary span programs only for functions with binary inputs ($\ell=2$), we may focus on this special form. 


A \emph{non-binary span program with orthogonal inputs} (NBSPwOI) $P$ evaluating a function $f:D_f\rightarrow[m]$ with $D_f\subseteq [\ell]^n$ consists of\footnote{Non-binary span programs may also have \emph{free input vectors} that will not be used here in this paper. }
\begin{itemize}
\item a finite-dimensional inner product space $V$,
\item $m$ target vectors $|t_0 \rangle, |t_2 \rangle,\ldots ,|t_{m-1} \rangle\in V$,
\item and an input set $I_{j,q}\subseteq V$ for every $1\leq j\leq n$ and $q\in [\ell]$.
\end{itemize}
Then $I\subseteq V$ is defined by 
$$I=\bigcup_{j=1}^n \bigcup_{q\in [\ell]} I_{j,q},$$
and for every $x\in D_f$ the set of \emph{available vectors} $I(x)$ is defined by 
$$I(x)=\bigcup_{j=1}^nI_{j,x_j}.$$
Indeed, when the $j$-th coordinate of  $x$ is equal to $q$ (i.e., $x_j=q$) then the vectors in $I_{j, q}$ become available.
We also let $A$ be the $d\times |I|$ matrix consisting of all input vectors as its columns where $d=\dim V$.

We say that $P$ evaluates the function $f$ if for every $x\in D_f$, $\ket{t_\alpha}$ belongs to the span of the available vectors $I(x)$ if and only if $\alpha=f(x)$. Even more, there should be two witnesses indicating this. Namely, a positive witness  $\ket{w_x} \in \mathbb C^{|I|}$ and a negative witness $\ket{\bar{w}_x}\in V$ satisfying the following conditions:
\begin{itemize}
\item The coordinates of $\ket{w_x}$ associated to unavailable vectors are zero. 
\item $A\ket{w_x} = \ket{t_\alpha}$. 
\item $\forall \ket v\in I(x)$ we have $\langle v|\bar w_x\rangle =0$. 
\item $\forall \beta\neq \alpha$ we have $\langle t_{\beta}|\bar{w}_x\rangle=1$. 
\end{itemize}


Let positive and negative complexities of $P$ together with the collections $w$ and $\bar w$ of positive and negative witnesses $(P,w,\bar{w})$ be
\begin{align*}
&\mathrm{wsize}^+(P,w,\bar{w}):=\max_{x\in D_f} ~\|\ket{w_x}\| ^2,\\
&\mathrm{wsize}^-(P,w,\bar{w}):=\max_{x\in D_f} ~\|A^\dagger \ket{\bar{w}_x}\| ^2.
\end{align*}
Then the complexity of $(P, w, \bar w)$ is equal to
\begin{align}\label{eq:wsize-pwwb}
 \mathrm{wsize}(P,w,\bar{w})=\sqrt{\mathrm{wsize}^-(P,w,\bar{w})\,\cdot\,\mathrm{wsize}^+(P,w,\bar{w})}. 
\end{align}

It is shown in~\cite{BT18} that for any NBSPwOI evaluating the function $f$, its complexity  $\mathrm{wsize}(P,w,\bar{w})$ is an upper bound on $Q(f)$. Furthermore, there always exists an associated NBSPwOI whose complexity is bounded by $O(\sqrt{\ell-1}Q(f))$. Thus, NBSPwOIs characterize the quantum query complexity of all functions up to a factor of $\sqrt{\ell-1}$.


\section{From decision trees to span programs}
In this section we first give a simple proof of the main result of~\cite{LL16} based on span programs. Later, getting intuition from this proof, we generalize this result for non-binary functions. 

Recall that a classical query algorithm for a function $f:D_f\to [m]$ with $D_f\subseteq \{0, 1\}^n$ can be modeled by a binary decision tree $\cT$ with internal vertices being indexed by elements of $\{1, \dots, n\}$, edges being indexed by $\{0,1\}$, and leaves being index by elements of $[m]$. 
The depth of the decision tree, which we denote by $T$, is the classical query complexity of this decision tree. In~\cite{LL16} it is assumed that there is a further algorithm that predicts the values of the queried bits. That is, at each internal vertex of $\cT$ it makes a guess for the answer of the associated query. This guess, of course, may depend on the answers to the previous queries.
Then it is proven that if for every $x\in D_f$ the number of mistakes of the guessing algorithm is at most $G$, then the quantum query complexity of $f$ is $O(\sqrt{TG})$.

We can visualize the guessing algorithm in the decision tree by coloring its edges. For each internal vertex of the decision tree, there are two outgoing edges indexed by $0$ and $1$, one of which is chosen by the guessing algorithm. We \emph{color} the chosen one black, and the other one red. We call such a coloring of the edges of the decision tree a \emph{guessing-coloring (hereafter, G-coloring)}. Now once we make a query at an internal vertex, its answer tells us which edge we should take, the black one or the red one. If it was black it means that the guessing algorithm made a correct guess, and if it was red it means that it made a mistake. 
Therefore, the number of mistakes of the guessing algorithm for every $x\in D_f$ equals the number of red edges in the path from the root to the leaf of the tree associated to $x$. 

Here we summarize the notion of G-coloring.
\begin{definition}[G-coloring]\label{def:G-coloring} A G-coloring of a decision tree $\cT$ is a coloring of its edges by two colors black and red, in such a way that any vertex of $\cT$ has at most one outgoing edge with black color. 
\end{definition}




We can now state the result of~\cite{LL16} based on decision trees and the notion of G-coloring.
 
\begin{theorem}[Lin and Lin~\cite{LL16}]\label{thm:binaryClassical2quantum}
Assume that we have a decision tree $\mathcal T$ for a function $f:D_f\to [m]$ with $D_f\subseteq \{0,1\}^n$ whose depth is $T$. Furthermore, assume that for a G-coloring of the edges of $\mathcal T$, the number of red edges in each path from the root to the leaves of $\mathcal T$ is at most $G$. Then there exists a quantum query algorithm computing the function $f$ with query complexity $O(\sqrt{GT})$.
\end{theorem}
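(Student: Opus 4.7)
The plan is to build a non-binary span program with orthogonal inputs (NBSPwOI, see Section~\ref{sec:NBSP}) from the decision tree $\cT$ together with its G-coloring, argue that its complexity is $O(\sqrt{GT})$, and then invoke the NBSPwOI upper bound on $Q(f)$ (which is tight up to the factor $\sqrt{\ell-1}=1$ in the binary-input regime). Concretely, I take $V$ to have a basis indexed by the internal vertices of $\cT$ together with the output symbols $[m]$, and identify every leaf labeled $\alpha$ with the basis vector $\ket{\alpha}$. For each edge $e=(v,u)$ of $\cT$ labeled $(j_v,q)$, I include the input vector $c_e(\ket{v}-\ket{u})$ in $I_{j_v,q}$, with weight $c_e=1$ if $e$ is \black{} and $c_e=\sqrt{G/T}$ if $e$ is \red{}; the targets are $\ket{t_\alpha}=\ket{\mathrm{root}}-\ket{\alpha}$ for $\alpha\in[m]$.

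For $x\in D_f$ with path $P_x=(v_0,\ldots,v_{T_x})$ ending at label $\alpha=f(x)$, the positive witness $\ket{w_x}$ places coefficient $1/c_{e_i}$ on the $i$-th edge of $P_x$; the telescoping identity $\sum_i(\ket{v_{i-1}}-\ket{v_i})=\ket{\mathrm{root}}-\ket{\alpha}$ gives $A\ket{w_x}=\ket{t_\alpha}$, and using $n_B^x\leq T$ together with $n_R^x\leq G$ one obtains $\wsize^+(P,w,\bar w)\leq n_B^x+n_R^x(T/G)\leq 2T$. For the negative witness I set $\langle v|\bar w_x\rangle=1$ for every vertex $v$ on $P_x$ (identifying the final leaf with $\ket{\alpha}$), $\langle\beta|\bar w_x\rangle=0$ for every $\beta\neq\alpha$, and choose the off-path values so as to minimize $\|A^\dagger\ket{\bar w_x}\|^2$. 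The key structural observation is that at any path vertex $v_i$ the off-path outgoing edge has opposite G-color to the path edge (since each vertex has at most one \black{} outgoing edge in a G-coloring), so the contribution from off-path edges directly adjacent to $P_x$ is at most $n_B^x(G/T)+n_R^x\cdot 1\leq 2G$. The remaining contributions, from edges inside the off-path subtrees hanging off $P_x$, are bounded by a minimum-energy (effective-conductance) argument inside each subtree with the already-fixed boundary values; the end result is $\wsize^-(P,w,\bar w)=O(G)$, whence $\wsize(P,w,\bar w)=\sqrt{\wsize^+\cdot\wsize^-}=O(\sqrt{GT})$ and the theorem follows.

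The main obstacle I foresee is making the bound on the internal-subtree contribution rigorous: a priori an off-path subtree may contain many leaves with labels $\beta\neq\alpha$, and the minimum energy must be controlled using the G-coloring structure inherited by the subtree. I plan to address this by taking the off-path potential to be piecewise constant on each subtree, optimizing the resulting one-parameter problem, and exploiting the weight choice $c_B^2=1,c_R^2=G/T$ together with the G-bounded number of \red{} edges on any root-to-leaf path to keep the cost within $O(G)$. As a cleaner alternative that avoids subtree bookkeeping, the same weight scheme can be packaged as a feasible dual adversary solution~\eqref{SDP:dual-SDP} with $\ket{u_{xj}}=a_e\ket{v_{i-1}}\otimes\ket{f(x)}_u$ and $\ket{w_{xj}}=b_e\ket{v_{i-1}}\otimes\ket{f(x)}_w$, where $\ket{\alpha}_u,\ket{\alpha}_w$ are $O(1)$-norm vectors obtained from the SVD of the $m\times m$ matrix $J-I$; the G-coloring then forces the two sibling edges at any path-divergence to have opposite colors, so $\bar a_e\,b_{e'}=1$ at every divergence and the objective is bounded by $O(\sqrt{GT})$ without any electrical-network analysis.
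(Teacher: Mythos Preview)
Your span-program construction has a real gap, and the ``piecewise constant off-path potential'' patch you propose cannot work. Once you identify every leaf with label $\alpha$ to the single basis vector $\ket{\alpha}$, the values $\langle v|\bar w_x\rangle$ on off-path vertices are \emph{not} free parameters: the orthogonality condition $\langle\bar w_x|v\rangle=0$ for every $v\in I(x)$ forces the potential to be constant along every \emph{available} edge, hence for every off-path vertex $u$ the value $\langle u|\bar w_x\rangle$ is pinned to the value at the (identified) leaf reached from $u$ by following the edges labeled $x_{J(\cdot)}$. There is nothing left to optimize. Moreover, because many off-path subtrees share the same few basis vectors $\ket{0},\ldots,\ket{m-1}$ at their leaves, the unavailable edges deep inside those subtrees can contribute massively to $\|A^\dagger\ket{\bar w_x}\|^2$. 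A concrete failure: take $f(x)=x_n$ with the full binary depth-$n$ tree querying $x_1,\ldots,x_n$ in order and the G-coloring ``always guess $0$''. For $x=0^n$ every internal vertex gets potential $1$ (its available-path leaf has label $0=f(x)$), while $\ket{1}$ must get potential $0$; the $2^{n-1}$ unavailable depth-$(n{-}1)$ edges into $\ket{1}$ then each contribute $c_\red^2=G/T=1$, so $\wsize^-\geq 2^{n-1}$. Your NBSPwOI therefore does not achieve $O(\sqrt{GT})$.

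The paper sidesteps this entirely with one extra line: rather than evaluating $f$, it evaluates the \emph{leaf function} $\tilde f(x)=\text{(leaf of }P_x\text{)}$, which determines $f$. Now every leaf is its own target, the basis is indexed by \emph{all} vertices of $\cT$, and one may simply take $\ket{\bar w_x}=\sum_{v\in P_x}\ket{v}$. Every available edge has both endpoints on $P_x$ or both off $P_x$, so orthogonality is automatic, and the only nonzero contributions to $\|A^\dagger\ket{\bar w_x}\|^2$ come from the (at most $T$) unavailable edges leaving $P_x$; the opposite-color observation you already made then gives $\wsize^-\leq W_\black G+W_\red T$ directly, with no subtree analysis at all. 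Your dual-adversary alternative at the end is essentially correct and is, in fact, the binary-input specialization of the paper's proof of Theorem~\ref{thm:classical2quantum}; note only that ``sibling edges have opposite colors'' uses the harmless WLOG that every internal vertex has \emph{exactly} one black outgoing edge (if both were red, recolor one black without increasing $G$).
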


We remark that the result of~\cite{LL16} also works for randomized algorithms. Nevertheless, here to present our main ideas we first consider deterministic decision trees. Later, randomized  query algorithms will be considered as well. 

To prove this theorem we design an NBSPwOI for $f$ with complexity $O(\sqrt{GT})$. To present this span program first we need to develop some notations. Let $V(\mathcal T)$ be the vertex set of $\mathcal T$. Then for every internal vertex $v\in V(\cT)$, its associated index is denoted by $J(v)$, i.e., $J(v)$ is the index $1\leq j\leq n$ that is queried by the classical algorithm at node $v$. The two outgoing edges of $v$ are indexed by elements of $\{0,1\}$ and connect $v$ to two other vertices. We denote these vertices by $N(v, 0)$ and $N(v, 1)$. That is, $N(v, q)$, for $q\in \{0,1\}$, is the next vertex that is reached from $v$ after following the outgoing edge with label $q$. We also represent the G-coloring of edges of $\cT$ by a function $C(v, q)\in \{\black, \red\}$ where $v$ is an internal vertex, $q\in \{0,1\}$ and $C(v, q)$ is the color of the outgoing edge of $v$ with label $q$.

\begin{proof}
For every $x\in D_f$ there is an associate leaf of the tree $\cT$ that is reached once we follow edges of the tree with labels $x_j$ starting from the root. In order to find $f(x)$ it suffices to find this associated leaf because this is what the classical query algorithm does; once we find the leaf associated to $x$, we find the path that the classical query algorithm would take and then find $f(x)$. Thus in order to compute $f$, we may compute another function $\tilde f$ which given $x$ outputs its associated leaf of $\mathcal T$, and to prove the upper bound of  $O(\sqrt{GT})$ on the quantum query complexity it suffices to design an NBSPwOI for $\tilde f$ with this complexity. 


The NBSPwOI is the following:
\begin{itemize}
\item the vector space $V$ is determined by the orthonormal basis indexed by vertices of $\mathcal T$:
$$\{\ket{v}\,|\, v\in V(\cT)\},$$ 
\item the input vectors are
$$I_{j,q}=\Big\{\sqrt{W_{C(v, q)}}\big(\ket{v}-\ket{{N(v, q)}}\big)  \,\Big|\, \forall v\in V(\cT) \text{ s.t. } J(v)=j \Big\},$$
where $W_{\black}$ and $W_{\red}$ are positive real numbers to be determined, 
\item the target vectors are indexed by leaves $u$ of the tree: 
$$\ket{t_u}=\ket{r}-\ket{u},$$
where $r\in V(\cT)$ is the root of the tree.
\end{itemize}

For every vertex $v$ of $\cT$ we denote by $P_v$ the (unique) path from the root $r$ to vertex $v$. 
Then for  every $x\in D_f$ there exists a path $P_x=P_{\tilde f(x)}$ from the root of the decision tree to the leaf $\tilde f(x)$. Thus the target vector $\big|t_{\tilde f(x)}\big\rangle$ equals
$$\big|t_{\tilde f(x)}\big\rangle= \ket r - \big|\tilde f(x)\big\rangle=\sum_{v\in P_{x}} \frac{1}{\sqrt{W_{C\big(v, x_{J(v)}\big)}}} \left\{ W_{C\big(v, x_{J(v)}\big)} \left(\ket{v}-\ket{N(v,x_{J(v)})}\right)\right\},$$
where the vectors in the braces are all available for $x$. Then since by assumptions the number of red edges along the path $P_{x}$ is at most $G$ and the number of all edges is at most $T$, the positive complexity is bounded by
$$\wsize^+\leq \frac{1}{W_{\red}}G+\frac{1}{W_{\black}}T.$$
We let the negative witness for $x$ to be 
$$\ket{\bar{w}_x}=\sum_{v \in P_{x} } \ket{v}.$$
It is easy to verify that $\ket{\bar w_x}$ is orthogonal to all available vectors, and that $\bra{\bar w_x} t_u\rangle = \bra{\bar w_x} r\rangle =1$ for all $u\neq \tilde f(x)$. Thus $\ket{\bar w_x}$ is a valid negative witness. Moreover, an input vector of the form
$$\sqrt{W_{C(v, q)}}\big(\ket{v}-\ket{{N(v, q)}}\big),$$
contributes in the negative witness size only if its corresponding edge $\{v, N(v, q)\}$ leaves the path $P_x$, i.e., they have only the vertex $v$ in common. In this case the contribution would be equal to $W_{C(v, q)}$, the weight of that edge. The number of such red (black) edges equals the number of black (red) edges in $P_x$, which is bounded by $ T$ ($G$). Therefore, the negative witness size is 
$$\wsize^-\leq  W_{\black} G+ W_{\red} T\big.$$
Now letting $W_{\black}=\frac{1}{W_{\red}}=\sqrt{\frac{T}{G}}$, both the positive and negative witnesses are bounded by $2\sqrt{GT}$. Therefore, the quantum query complexity of $\tilde f$, and then $f$ are bounded by $O(\sqrt{GT})$.
\end{proof}

\section{Main result: generalization to the non-binary case}

This section contains our main result which is a generalization of Theorem~\ref{thm:binaryClassical2quantum} for functions $f:D_f\to [m]$ with non-binary input alphabet $D_f\subseteq [\ell]^n$. In this case, a classical query algorithm corresponds to a decision tree whose internal vertices have out-degree $\ell$ (instead of 2). Moreover, a G-coloring can be defined similarly based on a guessing algorithm. Yet, we are interested in a further generalization of the notion of  decision tree which we explain by an example.


Consider the following trivial algorithm for finding the minimum of a list of numbers in $[\ell]$: we keep a candidate minimum, and as we query the numbers in the list one by one, we update it once we reach a  smaller number. In this algorithm, the possible numbers as answers to a query are of two types: numbers that are greater than or equal to the current candidate minimum, and those that are smaller. Now assuming that the answer to that query is of the first type, what we do next is independent of its exact value (since we simply ignore it and query the next index). Considering the associated decision tree $\cT$, for each vertex $v$ we have a candidate minimum, and the outgoing edges of $v$ are labeled by different numbers in $[\ell]$. Then by the above discussion, the subtrees of $\cT$ hanging below the outgoing edges whose labels are greater than or equal to the current candidate minimum are identical. Thus we can identify those edges and their associated subtrees. In this case the outgoing edges of $v$ are not labeled by elements of $[\ell]$, but by its certain subsets that form a partition. Indeed, there is an outgoing edge whose label is the \emph{subset} of numbers greater than or equal to the current candidate minimum, and an outgoing edge for any smaller number.

Motivated by the above example of minimum finding, we generalize the notion of decision tree $\cT$ for a function $f: D_f\to [m]$ with non-binary input alphabet ($D_f\subseteq [\ell]^n$). As before each internal vertex $v$ of $\cT$ corresponds to a query index $1\leq J(v)\leq n$. Each outgoing edge of this vertex is labeled by a subset of $[\ell]$, and we assume that these subsets form a partition of $[\ell]$.  We denote this partition by 
\begin{align}\label{eq:partition-Q}
\bigcup_{q=0}^{\ell-1} Q_v(q)=[\ell], 
\end{align}
where here $Q_v(q)$ is the subset in the partition that contains $q\in [\ell]$. Thus $Q_v(q)\subseteq [\ell]$ contains $q$, and for $q, q'\in [\ell]$ either $Q_v(q), Q_v(q')$ are disjoint or are equal. Moreover, the out-degree of $v$ equals $|\{Q_v(q)\,:\,  q\in [\ell]\}|$, the number of different $Q_v(q)$'s. We also denote the neighbor vertex of $v$ connected to the edge with label $Q_v(q)$ by $N(v, Q_v(q))$. See Figure~\ref{fig:2tree} for an example of a decision tree.

\begin{figure}[ht]
\centering \includegraphics[scale=.76]{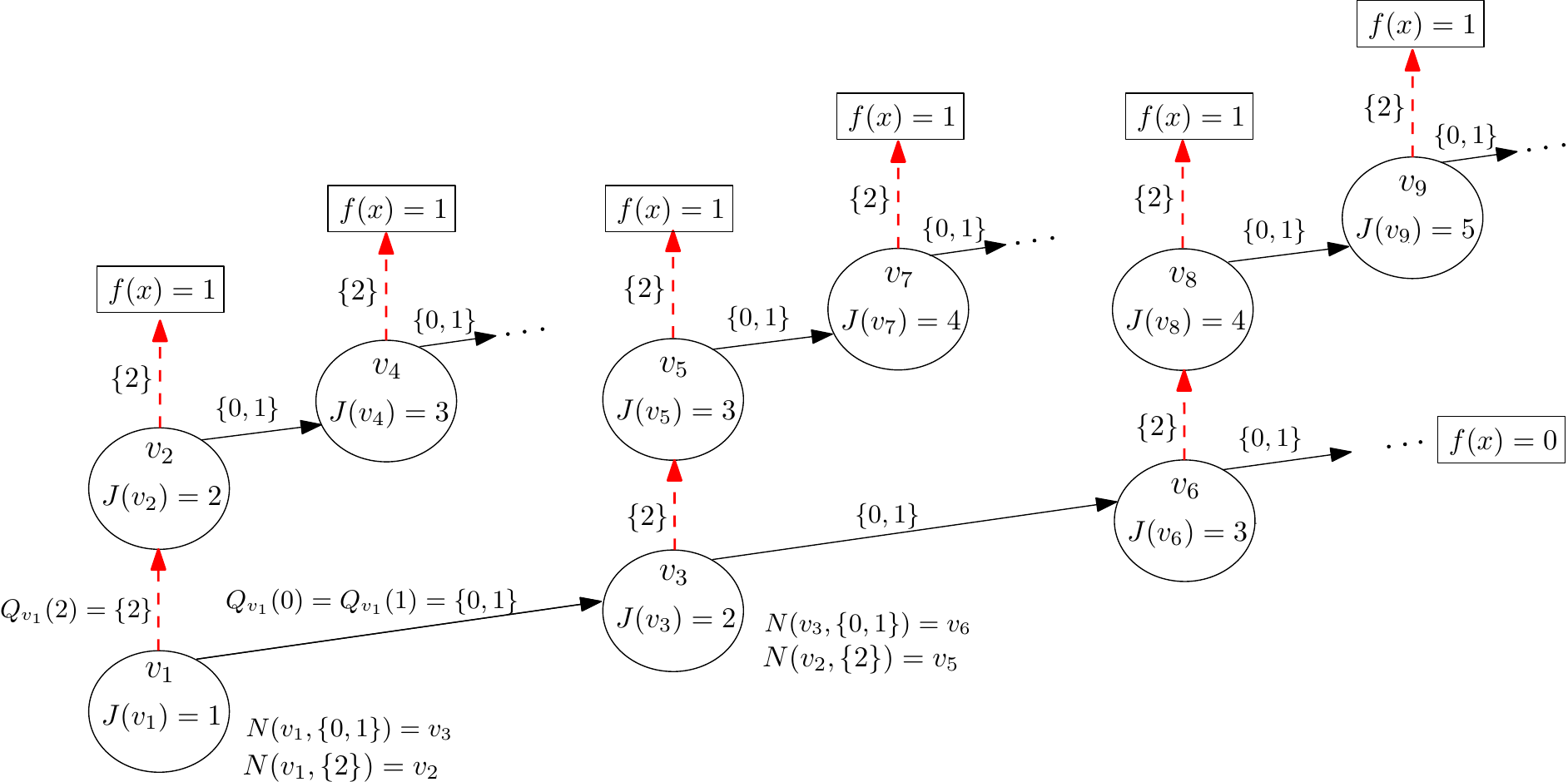}
\figcaption{Decision tree for deciding whether a given string $x\in \{0,1,2\}^n$ contains at least two 2's.
At any vertex $v$ the queried index is $J(v)$ and the result of the query belongs to one of the two sets appeared in the labels of outgoing edges of $v$.
This tree has a natural G-coloring: edges with label $\{2\}$ are red (dashed edges) and edges with label $\{0,1\}$ are black (solid edges).  The depth of the decision tree is $T=n$, and $f(x)$ would be determined once we see two red edges. Thus $G=2$ and the quantum query complexity of this problem is $O(\sqrt n)$. 
}
\label{fig:2tree}
\end{figure}

Now given a decision tree $\cT$ as above, the corresponding classical algorithm works as follows. We start with the root $r$ of the tree and query $J(r)$. Then $x_{J(r)}\in [\ell]$ corresponds to the outgoing edge of $v$ with label $Q_v(x_{J(r)})$. We take that edge and move to the next vertex $N(v, Q_v(x_{J(r)}))$. We continue until we reach a leaf of the tree which determines the value of $f(x)$.

The notation of G-coloring can also be generalized similarly. Recall that a G-coloring comes from a guessing algorithm that in each step predicts the answer to the queried index. In our generalized decision tree whose edges are labeled by subsets of $[\ell]$, we assume that the guessing algorithm chooses one of these subsets as its guess. Rephrasing this in terms of colors, we assume that for each internal vertex $v$ of $\cT$, one of its outgoing edges is colored in black (meaning that its label is the predicted answer) and its other outgoing edges are colored in red. We denote the color of the outgoing edge of vertex $v$ with label $Q_v(q)$ by $C(v, Q_v(q))\in \{\black, \red\}$.

Here is a summary of the notions of generalized decision tree and G-coloring explained above. 

\begin{definition}[Generalized decision tree and G-coloring]
A generalized decision tree $\cT$ is a rooted directed tree such that  each internal vertex $v$ (including the root) of $\cT$ corresponds to a query index $1\leq J(v)\leq n$. Outgoing edges of $v$ are labeled by subsets of $[\ell]$ that form a partition of $[\ell]$. We denote the subset that contains $q\in [\ell]$ by $Q_v(q)$ so that~\eqref{eq:partition-Q} holds. 
Leaves of $\cT$ are labeled with elements  of $[m]$. 

We say that $\cT$ decides a function $f: D_f\to [m]$ with $D_f\subseteq [\ell]^n$ if for every $x\in D_f$, by starting from the root of $\cT$ and following edges labeled by $Q_v(x_{J(v)})$ we reach a leaf with label $m=f(x)$.

As in Definition~\ref{def:G-coloring}, a G-coloring of a generalized decision tree $\cT$ is a coloring of its edges by two colors black and red, in such a way that any vertex of $\cT$ has at most one outgoing edge with black color.
\end{definition}\color{black}

We  also consider \emph{randomized classical query algorithms}. In this case, for each value $\zeta$ of the outcomes of some coin tosses, we have a (deterministic) generalized decision tree $\cT_\zeta$ as above. We also assume that each of these decision trees $\cT_\zeta$ is equipped with a guessing algorithm which itself may be randomized. Nevertheless, we may assume with no loss of generality that $\zeta$ includes the randomness of the guessing algorithm as well. Therefore, for any  $\zeta$ we have a generalized decision tree with a G-coloring as before. We assume that the classical randomized query algorithm outputs the correct answer $f(x)$ with high probability:
\begin{align}\label{eq:bounded-error}
\Pr_\zeta\big[\text{output of } \cT_\zeta \text{ on } x \text{ equals } f(x)\big]\geq 0.9.
\end{align}
The complexity of such a randomized query algorithm is given by the \emph{expectation} of the number of queries over the random choice of $\zeta$.

We can now state our generalization of Theorem~\ref{thm:binaryClassical2quantum}.

\begin{theorem}\label{thm:classical2quantum}
In the following let $f:D_f\to [m]$ be a function with $D_f\subseteq [\ell]^n$. 
\begin{itemize}
\item[{\rm(i)}] Let $\cT$ be a generalized decision tree for $f$ equipped with a G-coloring. Let $T$ be the depth of $\cT$ and let $G$ be the maximum number of red edges in any path from the root to leaves of $\cT$. Then the quantum query complexity of $f$ is upper bounded by $O(\sqrt{TG})$.

\item[\rm{(ii)}] Let $\{\cT_\zeta \,:\, \zeta\}$ be a set of generalized decision trees corresponding to a  \emph{randomized} classical query algorithm evaluating $f$ with bounded error as in~\eqref{eq:bounded-error}.
Moreover, suppose that each $\cT_\zeta$ is equipped with a G-coloring. Let $P_x^\zeta$ be the path from the root to the leaf of $\cT_\zeta$ associated to $x\in D_f$. Let $T_x^\zeta$ be the length of the path $P_x^\zeta$, and let $G_x^\zeta$ be the number of red edges in this path. Define
\begin{align*}
T=\max_x \mathbb{E}_\zeta[T_x^\zeta],\\
G=\max_x \mathbb{E}_\zeta[G_x^\zeta],
\end{align*}
where the expectation is over the random choice of $\zeta$. 
 Then the quantum query complexity of $f$ is  $O(\sqrt{TG})$. 
\end{itemize}
\end{theorem}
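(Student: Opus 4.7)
My plan is to exhibit an explicit feasible solution for the dual adversary SDP~\eqref{SDP:state-generation-SDP} directly in terms of the generalized decision tree $\cT$, paralleling the NBSPwOI construction from the proof of Theorem~\ref{thm:binaryClassical2quantum} but translated to the dual-adversary formulation so as to avoid the $\sqrt{\ell-1}$ overhead inherent in NBSPwOI. As in the binary proof, the first step is to reduce to a leaf-finding problem: let $\tilde f(x)$ return the leaf of $\cT$ reached by $x$, and view this as a state-generation problem with targets $|\psi_x\rangle = |\tilde f(x)\rangle$, so that the SDP constraint becomes $\sum_{j:x_j\neq y_j}\langle u_{xj}|w_{yj}\rangle = 1 - \delta_{\tilde f(x),\tilde f(y)}$.

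For part~(i), I would work in a space spanned by a basis $\{|e_{v,S}\rangle\}$ indexed by the edges of $\cT$ (one vector for each internal vertex $v$ and each subset $S$ in its partition) and set, for $x\in D_f$ and $j\in[n]$,
\[
|u_{xj}\rangle = \sum_{v \in P_x,\, J(v)=j} \frac{1}{\sqrt{W_{C(v,Q_v(x_j))}}}\,|e_{v,Q_v(x_j)}\rangle, \quad
|w_{yj}\rangle = \sum_{v \in P_y,\, J(v)=j}\ \sum_{S\neq Q_v(y_j)} \sqrt{W_{C(v,S)}}\,|e_{v,S}\rangle,
\]
with positive weights $W_{\black},W_{\red}$ to be chosen. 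The key asymmetry is that $|u_{xj}\rangle$ points at the single edge $x$ takes at each path vertex queried at $j$, while $|w_{yj}\rangle$ is supported on every edge that $y$ does \emph{not} take, so $\langle u_{xj}|w_{yj}\rangle$ collects a contribution only at a vertex $v\in P_x\cap P_y$ with $Q_v(x_j)\neq Q_v(y_j)$. This occurs precisely at the first divergence vertex of $P_x$ and $P_y$ (when one exists), where the weight factors cancel to give contribution $1$; vertices strictly before the divergence contribute zero because $x$ and $y$ select the same edge there, and ``within-subset'' alphabet differences (where $x_j\neq y_j$ but $Q_v(x_j)=Q_v(y_j)$) never contribute by design of the edge basis. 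The positive size satisfies $\sum_j\||u_{xj}\rangle\|^2 = \sum_{v\in P_x}1/W_{C(v,Q_v(x_{J(v)}))} \leq T/W_{\black} + G/W_{\red}$ from the depth and mistake-count bounds, and a symmetric analysis of $\sum_j\||w_{yj}\rangle\|^2$ using the G-coloring at each path vertex gives a bound of the form $GW_{\black} + TW_{\red}$; balancing via $W_{\black}=\sqrt{T/G}$ and $W_{\red}=\sqrt{G/T}$ then gives $O(\sqrt{TG})$ for both.

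For part~(ii), I would assemble the per-$\zeta$ deterministic witnesses into a single feasible solution on a direct-sum space, $|u_{xj}\rangle = \bigoplus_\zeta\sqrt{\Pr[\zeta]}\,|u_{xj}^\zeta\rangle$ and similarly for $|w_{xj}\rangle$. The inner products average correctly, $\sum_j\langle u_{xj}|w_{yj}\rangle = 1 - \E_\zeta[\delta_{\tilde f_\zeta(x),\tilde f_\zeta(y)}]$, which is exactly the constraint of~\eqref{SDP:state-generation-SDP} for the target state $|\psi_x\rangle = \sum_\zeta\sqrt{\Pr[\zeta]}\,|\zeta\rangle\otimes|\tilde f_\zeta(x)\rangle$; measuring this state and reading off its $\tilde f$-register produces $f(x)$ with bounded error by~\eqref{eq:bounded-error}, so the SDP bound transfers to $f$ itself. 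The sums of squared norms become $\E_\zeta[T_x^\zeta/W_{\black} + G_x^\zeta/W_{\red}]$ and its counterpart, so choosing $W_{\black}=\sqrt{T/G}$ and $W_{\red}=\sqrt{G/T}$ for $T=\max_x\E_\zeta[T_x^\zeta]$ and $G=\max_x\E_\zeta[G_x^\zeta]$ yields $O(\sqrt{TG})$.

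The main obstacle I anticipate is the analysis of $\sum_j\||w_{yj}\rangle\|^2$ in part~(i): while each path vertex contributes a single weighted term to the positive witness, the negative witness is sensitive to every edge that $y$ does not take at each path vertex, so obtaining the desired $GW_{\black}+TW_{\red}$ form requires using the G-coloring structure tightly (exactly one outgoing edge is black at each vertex, the rest red) together with the fact that the number of red edges along $P_y$ is at most $G$; doing this cleanly for generalized decision trees is the step that forces the edge-basis construction rather than the naive one indexed by $(v,q)\in V(\cT)\times[\ell]$. A secondary subtlety in part~(ii) is that the weight balancing is driven by the global averaged quantities $T$ and $G$ rather than the per-$\zeta$ values, so Jensen's inequality (or Cauchy--Schwarz) must be invoked to pass the expectation through the max over $x$ without losing a factor.
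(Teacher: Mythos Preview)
Your overall plan is the same as the paper's, and part~(ii) matches it closely. The gap is exactly where you flag it: the bound you claim for $\sum_j\|\ket{w_{yj}}\|^2$ is not achieved by your edge-basis construction. At a path vertex $v$ where $y$ takes the black edge, your $\ket{w_{yj}}$ contains a term $\sqrt{W_{\red}}\,\ket{e_{v,S}}$ for \emph{every} red outgoing edge $S$ of $v$, so the contribution to the squared norm is $(d_v-1)W_{\red}$ where $d_v$ is the out-degree. Summing along $P_y$ gives
\[
\sum_j\|\ket{w_{yj}}\|^2 \;=\; G\,W_{\black} \;+\; W_{\red}\!\!\sum_{v\in P_y}\big(\text{\# red edges of $v$ not taken by $y$}\big),
\]
and the second sum is only bounded by $T\cdot(d_{\max}-1)$, not by $T$. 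After rebalancing the weights you recover precisely the $\sqrt{d_{\max}-1}$ (in the worst case $\sqrt{\ell-1}$) overhead that the passage from NBSPwOI to the dual adversary bound is supposed to eliminate. The ``G-coloring structure used tightly'' does not help here: it controls how many path vertices have a red outgoing edge \emph{along} $P_y$, not how many red edges leave each such vertex.

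The paper's fix is to change the basis from edges $(v,S)$ to pairs $(v,\text{color})\in V(\cT)\times\{\black,\red\}$, so that all red outgoing edges of $v$ collapse into a single basis vector $\ket{v,\red}$. Then $\ket{w_{yj}}$ has at most two terms per vertex and one gets the desired $GW_\black+O(T)W_\red$. Of course this collapsing destroys the ability to distinguish \emph{which} edge was taken, so to recover the constraint one tensors on the vectors $\ket{\mu_Q},\ket{\nu_Q}$ (indexed by subsets $Q\subseteq[\ell]$) from~\cite{LMRS10} satisfying $\langle\mu_Q|\nu_P\rangle=1-\delta_{Q,P}$ with $\|\mu_Q\|^2,\|\nu_Q\|^2\le 2$: the color register carries the weight, the $\mu/\nu$ register detects whether $x$ and $y$ took the same partition block at $v$. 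A further $\ket{\tilde\mu_{f(x)}},\ket{\tilde\nu_{f(y)}}$ factor handles the $f(x)=f(y)$ case directly in the function-evaluation SDP~\eqref{SDP:dual-SDP}, which is equivalent to your reduction to $\tilde f$ and state generation.
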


The span program in the proof of Theorem~\ref{thm:binaryClassical2quantum} can easily be adapted for a proof of the above theorem, yet in the complexity of the resulting span program we see an extra factor of $\sqrt{\ell-1}$, i.e., we get the upper bound of $O(\sqrt{(\ell-1)GT})$ on the quantum query complexity. To remove this undesirable factor, getting ideas from the span program in the proof of Theorem~\ref{thm:binaryClassical2quantum}, we directly construct a feasible solution of the dual adversary SDP~\eqref{SDP:dual-SDP}. Indeed, our starting point for proving Theorem~\ref{thm:classical2quantum} is the proof of Theorem~\ref{thm:binaryClassical2quantum} based on span programs. Then getting intuition from this proof, we design a feasible solution of the dual adversary SDP with the desired objective value.


\begin{proof} 
(i) 
Let $V_j(\cT)$ be the set of vertices of $\cT$ associated with query index $j$, i.e., $V_j(\cT) = J^{-1}(j)$. Also let $P_x$ be the path from the root $r$ to the leaf of $\cT$ associated to $x\in D_f$. We can assume with no loss of generality that $V_j(\cT)\cap P_x$ contains at most one vertex since otherwise in computing $f(x)$ we are querying index $j$ more than once.

To construct the feasible solution of the dual adversary SDP we will need the set of vectors $\{\ket{\mu_Q}:\, Q\subseteq [\ell] \}$ and $\{\ket{\nu_Q}:\, Q\subseteq [\ell]\}$ in $\mathbb C^{2^{[\ell]}}$ first appeared in~\cite{LMRS10}:
\begin{align}\label{eq:mu}
&\ket{\mu_Q}=\sqrt{\frac{2(2^\ell-1)}{2^\ell}}\left(-\theta\ket{Q}+\frac{\sqrt{1-\theta^2}}{\sqrt{2^\ell-1}}\sum_{P\neq Q}\ket{P}\right), \\ \label{eq:nu}
&\ket{\nu_Q}=\sqrt{\frac{2(2^\ell-1)}{2^\ell}}\left(\sqrt{1-\theta^2}\ket{Q}+\frac{\theta}{\sqrt{2^\ell-1}}\sum_{P\neq Q}\ket{P}\right),
\end{align}
where $\theta=\sqrt{\frac12-\frac{\sqrt{2^\ell-1}}{2^\ell}}$. These vectors have the property that $\|\ket{\mu_Q}\|^2=\|\ket{\nu_Q}\|^2=\frac{2(2^\ell-1)}{2^\ell}\leq 2$ for all $Q$ and 
$$\bra {\mu_Q} \nu_P \rangle=1-\delta_{Q, P}.$$
Also we use the set of vectors $\{\ket{\tilde\mu_\alpha}:\, \alpha\in [m] \}$ and $\{\ket{\tilde\nu_\alpha}:\, \alpha\in [m]\}$ in $\mathbb C^{m}$  defined similarly as above with the property that $\|\ket{\tilde\mu_\alpha}\|^2=\|\ket{\tilde \nu_\alpha}\|^2=\frac{2(m-1)}{m}\leq 2$ for all $\alpha$, and that
$\bra {\tilde \mu_\alpha} \tilde\nu_\beta \rangle=1-\delta_{\alpha, \beta}$.

Now define vectors $\ket{u_{xj}}$ and $\ket{w_{xj}}$ in the vector space $\mathbb C^{V(\cT)}\otimes \mathbb C^{\{\black, \red\}}\otimes \mathbb C^{2^{[\ell]}}\otimes \mathbb C^m$ as follows:
\begin{equation*}
\ket{u_{xj}}=\left\lbrace
\begin{array}{ll}
\frac{1}{\sqrt{W_{C(v,Q_v(x_j))}}}\big|{v , C(v,Q_v(x_j))}\big\rangle \otimes \ket{\mu_{Q_v(x_j)}} \otimes \ket{\tilde \mu_{f(x)}} ~& \text{if }\exists v\in P_{x}\cap V_j(\cT) \\ 
0 & {\rm otherwise,}
\end{array} 
\right. 
\end{equation*}
and
\begin{equation*}
\ket{w_{xj}}=\left\lbrace
\begin{array}{ll}
 \sum_{c\in C_{v, xj}}\sqrt{W_c}\ket{v, c} \otimes \ket{\nu_{Q_v(x_j)}}\otimes \ket{\tilde \nu_{f(x)}} ~& \text{if }\exists v\in P_{x}\cap V_j(\cT) \\ 
0 & {\rm otherwise,}
\end{array} 
\right.
\end{equation*}
where assuming that $v\in P_{x}\cap V_j(\cT)$,  $C_{v, xj}\subseteq\{\black, \red\}$ is defined by
\begin{align}\label{eq:def-C-xj}
C_{v, xj}=\big\{C(v,Q_v(q)): ~\, Q_v(q)\neq Q_v(x_j)\big\}.
\end{align}
\color{black}
Observe that assuming there is a (unique) vertex $v\in P_x\cap V_j(\cT)$, $\ket{u_{xj}}$ is defined in terms of the label and color of the outgoing edge of $v$ with label $Q_v(x_j)$. Moreover, $\ket{w_{xj}}$ is equal to either
$$\sqrt{W_{\red}} \ket{v, \red} \otimes \ket{\nu_{Q_v(x_j)}}\otimes  \ket{\tilde \nu_{f(x)}}, $$
or 
$$\Big(\sqrt{W_{\red}} \ket{v, \red}  + \sqrt{W_{\black}} \ket{v, \black}\Big) \otimes \ket{\nu_{Q_v(x_j)}}\otimes  \ket{\tilde \nu_{f(x)}},$$
depending on whether $C(v, Q_v(x_j)) = \black$ or $C(v, Q_v(x_j)) = \red$ respectively. 

We claim that these vectors form a solution of the SDP~\eqref{SDP:dual-SDP}. For every $x,y\in D_f$ with $ f(x)\neq  f(y)$ there exists a unique vertex $v\in V(\cT)$ such that $v\in P_{x}\cap P_{y}$ with  $Q_v^{x_{J(v)}}\neq Q_v^{y_{J(v)}}$ and in particular $x_{J(v)} \neq y_{J(v)}$. In this case,
$$\bra {u_{xJ(v)}} w_{yJ(v)}\rangle =1.$$
Moreover, for any $j\neq J(v)$, we have $\bra {u_{xj}} w_{yj}\rangle=0$ since for such $j$'s either one of $\ket{u_{xj}}, \ket{w_{yj}}$ is zero, or these vectors correspond to different vertices, or they correspond to the same vertex $v'\in P_x\cap P_y$ with $Q_{v'}(x_{J(v')}) = Q_{v'}(y_{J(v')})$ in which case $\ket{\mu_{Q_{v'}(x_{J(v')})}}$ and $\ket{\nu_{Q_{v'}(y_{J(v')})}}$ are orthogonal. Note that here we use the fact that if $f(x)\neq f(y)$ then $\bra{\tilde \mu_{f(x)}} \tilde \nu_{f(y)}\rangle =1$.
As a result,
$$\sum_{j:x_j\neq y_j}\bra{u_{xj}}w_{yj}\rangle=1. $$ 
Also if $f(x) =f(y)$ then since $\ket{\tilde \mu_{f(x)}}$ and $\ket{\tilde \nu_{f(y)}}$ are orthogonal we have 
$$\sum_{j:x_j\neq y_j}\bra{u_{xj}}w_{yj}\rangle=0. $$
Therefore, the vectors $\ket{u_{xj}}$ and $\ket{w_{xj}}$ form a feasible solution of the dual adversary SDP.

Now we compute the objective value. By assumption there are at most $ T$ edges in $P_x$ with black color, and at most $G$ red edges in $P_x$. Also the norm-squared of $\ket{\mu_Q}$'s and $\ket{\tilde \mu_\alpha}$'s are bounded by $2$. Therefore,
\begin{align*}
\sum_{j=1}^n \|\ket{u_{xj}}\|^2\leq 4\Big( \frac{1}{W_{\black}} T + \frac{1}{W_{\red}} G\Big).
\end{align*}
Also, in computing $\sum_{j=1}^n \|\ket{w_{xj}}\|^2$, for every vertex $v\in P_x$, if $C(v, Q_v(x_{J(v)}))=\black$ we get a term of $4W_{\red}$, and if $C(v, Q_v(x_{J(v)}))=\red$ we get a contribution of $4(W_{\black} + W_{\red})$. Now having a bound on the number of black and red edges in $P_x$ we find that
\begin{align*}
\sum_{j=1}^n \|\ket{w_{xj}}\|^2=4\Big( W_{\red}T +(W_{\black} +W_{\red})G \Big)\leq  4\Big( 2W_{\red}T +W_{\black}G \Big). 
\end{align*}
Therefore, if we let $W_{\black}=\frac{1}{W_{\red}}=\sqrt{\frac{T}{G}}$, then the objective value of the SDP~\eqref{SDP:dual-SDP} will be $O(\sqrt{GT})$.\\
\newline
\hspace{20pt}
\noindent
(ii) Let $f_\zeta: D_f\to [m]$ be the function that is computed by the decision tree $\cT_\zeta$. Then by assumption we have
\begin{align}\label{eq:f-zeta-exp}
\E_\zeta\big[\delta_{f_\zeta(x), f(x)}\big] \geq 0.9.
\end{align}
On the other hand, by part (i) for every $\zeta$ there is a feasible solution $\big|{u_{xj}^\zeta}\big\rangle$ and $\big|{w_{xj}^\zeta}\big\rangle$ of the dual adversary SDP for $f_\zeta$ with
\begin{equation*}
\sum_{j:x_j\neq y_j} \Big\langle u_{xj}^\zeta\ket{w_{yj}^\zeta}=1-\delta_{f_\zeta(x),f_\zeta(y)},
\end{equation*}
such that
\begin{align*}
\sum_{j=1}^n \big\|\big|{u^{\zeta}_{xj}}\big\rangle\big\|^2\leq 4\Big( \frac{1}{W_{\black}} T_x^\zeta + \frac{1}{W_{\red}} G_x^\zeta\Big),
\end{align*}
and 
\begin{align*}
\sum_{j=1}^n \big\|\big|{w^{\zeta}_{xj}}\big\rangle \big\|^2\leq  4\Big( 2W_{\red}T_x^\zeta +W_{\black}G_x^\zeta\Big). 
\end{align*}
Let us define
\begin{equation}\label{eq:u-zeta}
\ket{u_{xj}} = \frac{1}{\sqrt K}\sum_\zeta \big|{u_{xj}^\zeta}\big\rangle\otimes \ket \zeta,
\end{equation}
and
\begin{equation}\label{eq:w-zeta}
\ket{w_{xj}} = \frac{1}{\sqrt K}\sum_\zeta \big|{w_{xj}^\zeta}\big\rangle\otimes \ket \zeta,
\end{equation}
where $K$ is the number of possible values that $\zeta$ takes. 
Then we have
\begin{equation}\label{eq:f-zeta-x-y}
\sum_{j:x_j\neq y_j} \langle u_{xj}\ket{w_{yj}}=1-\frac1K\sum_\zeta \delta_{f_\zeta(x),f_\zeta(y)}.
\end{equation}
Now define
\begin{equation}\label{eq:psi-zeta}
\ket{\psi_x}:=\frac1{\sqrt{K}}\sum_\zeta\ket{f_\zeta(x)}\ket{\zeta},
\end{equation}
and consider the state generation problem for these vectors. Observe that
\begin{equation*}
\langle \psi_x\ket{\psi_y}=\frac1K \sum_\zeta \delta_{f_\zeta(x),f_\zeta(y)}.
\end{equation*}
Therefore, by~\eqref{eq:f-zeta-x-y} the vectors $\ket{u_{xj}}$ and $\ket{w_{xj}}$ form a feasible solution of the dual adversary SDP~\eqref{SDP:state-generation-SDP} for this state generation problem. Letting $M$ be the objective value of this SDP for these vectors, we conclude that with $O(M)$ quantum queries to $x$ we can generate a state $\rho_x$ such that $\|\rho_x - \ket{\psi_x}\bra{\psi_x}\|_{\tr}\leq 0.1$. Then measuring the first register of $\rho_x$ in the computational basis $\big\{\ket\alpha\,:\, \alpha\in [m], \big\}$ we have
\begin{align*}
\Pr[\text{measurement outcome equals }  f(x)]&= \tr\big[\rho_x \cdot \ket{f(x)}\bra{f(x)}\otimes I\big]\\
&\geq \tr\big[\ket{\psi_x}\bra{\psi_x} \cdot \ket{f(x)}\bra{f(x)}\otimes I\big] -0.1\\
& =\E_\zeta\big[ \delta_{f_\zeta(x), f(x)}\big]- 0.1\\
& \geq 0.9-0.1,
\end{align*} 
where in the last inequality we use~\eqref{eq:f-zeta-exp}. We conclude that there is a quantum query algorithm which makes $O(M)$ quantum queries and outputs $f(x)$ with probability at least $0.8$. Thus we only need to bound $M$, the objective value of the dual adversary bound. 

We compute \begin{align*}
\sum_{j=1}^n \|\ket{u_{xj}}\|^2 &=\frac{1}{K}\sum_\zeta \sum_{j=1}^n \big\|\big|{u^{\zeta}_{xj}}\big\rangle\big\|^2 \\
& \leq 4\frac{1}{K}\sum_\zeta \Big( \frac{1}{W_{\black}} T_x^\zeta + \frac{1}{W_{\red}} G_x^\zeta\Big)\\
& = 4\Big( \frac{1}{W_{\black}} \mathbb E_\zeta\big[T_x^\zeta\big] + \frac{1}{W_{\red}} \mathbb E_\zeta\big[G_x^\zeta\big]\Big)\\
&\leq 4\Big( \frac{1}{W_{\black}} T + \frac{1}{W_{\red}} G\Big)
\end{align*}
and similarly
\begin{align*}
\sum_{j=1}^n \|\ket{u_{xj}}\|^2\leq 4\Big( 2W_{\red} T +W_{\black}G\Big).
\end{align*}
Then as before letting $W_{\black}=\frac{1}{W_\red}=\sqrt{\frac{T}{G}}$, we find that the objective value of this feasible solution is bounded by $M=O(\sqrt{GT})$. We are done. 
\end{proof}


In the proof of Theorem~\ref{thm:classical2quantum} we assigned two different weights to edges of a decision tree based on their colors; the weight of any red edge is $W_\red$ and the weight of any black edge is $W_\black$. One may suggest that by assigning different wights to edges of $\cT$ we may get better bounds. That is, for any internal vertex $v$ of $\cT$, we may choose two weights $W_{v, \black}, W_{v, \red}$ and assign them to the outgoing edges of $v$ with the corresponding colors. Then the proof of Theorem~\ref{thm:classical2quantum} can be adopted to get a bound of the form $O(\max_{x, y} \sqrt{ M_x^+M_y^-})$ on the quantum query complexity where
\begin{align*}
&M_x^+=\sum_{\stackrel{v\in P_x:}{C(v, Q_v(x_{J(v)})) = \black}} \frac{1}{W_{v, \black}}+ \sum_{\stackrel{v\in P_x:}{C(v, Q_v(x_{J(v)})) = \red}} \frac{1}{W_{v, \red}}  ,\\
&M_x^-=\sum_{\stackrel{v\in P_x:}{C(v, Q_v(x_{J(v)})) = \black}} W_{v, \red}+ \sum_{\stackrel{v\in P_x:}{C(v, Q_v(x_{J(v)})) = \red}} W_{v, \black}.
\end{align*}
Then a simple application of the Cauchy-Schwartz inequality and $\max_{x, y} \sqrt{M_x^+ M_y^-}\geq \max_x\sqrt{M_x^+M_x^-}$ would show that 
updating the weights by
$$W'_{v, \black} = \frac{1}{W'_{v, \red}} = \sqrt{\frac{W_{v, \black}}{W_{v, \red}}},$$
would improve the upper bound $O(\max_{x, y} \sqrt{ M_x^+M_y^-})$. As a result, with no loss of generality we may assume that 
$$W_{v, \black} = \frac{1}{W_{v, \red}}.$$ 
Nevertheless, we still have the freedom to choose different weights for vertices of the decision tree $\cT$. These weights could depend on some parameter of the state of algorithm (decision tree) that is updated as we proceed. Moreover, it could depend on the guessing algorithm, e.g., on the number of red edges we have seen so far. In the following theorem, we analyze the latter option, and leave further investigation of this idea for future works. 



\begin{theorem}\label{thm:class2quantumTg}
 Let $\{\cT_\zeta \,:\, \zeta\}$ be a set of generalized decision trees corresponding to a  \emph{randomized} classical query algorithm evaluating $f$ with bounded error as in~\eqref{eq:bounded-error}.
Moreover, suppose that each $\cT_\zeta$ is equipped with a G-coloring. Let $P_x^\zeta$ be the path from the root to the leaf of $\cT_\zeta$ associated to $x\in D_f$. Let $G_x^\zeta$ be the number of red edges in $P_x^\zeta$, and for $1\leq g\leq G_x^\zeta$, let $T_{g,x}^\zeta$ be the number of black edges in $P_x^\zeta$ after the $g$-th red edge and before the next red one. Also let $T_{0, x}^\zeta$ be the number of black edges before the first red edge in $P_x^\zeta$, and let $T_{g, x}^\zeta =0$ for $g> G_x^\zeta$.
Let $G=\max_{x, \zeta} G_{x}^\zeta$ and define
\begin{align*}
T_{g}=\max_x \E_\zeta[T_{g,x}^\zeta],\qquad \quad 0\leq g\leq G.
\end{align*}
where the expectation is over the random choice of $\zeta$. 
 Then the quantum query complexity of $f$ is
\begin{equation*}
O \left(\sum_{g=1}^G \sqrt{{T}_{g}}\right).
\end{equation*}
\end{theorem}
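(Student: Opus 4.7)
My plan is to adapt the proof of Theorem~\ref{thm:classical2quantum}(ii), replacing the single pair of weights $W_\black,W_\red$ by per-segment weights $W_{g,\black}$ and $W_{g,\red}$ for $0\le g\le G$. For each decision tree $\cT_\zeta$ and each vertex $v\in V(\cT_\zeta)$, I would define $g_\zeta(v)$ to be the number of red edges on the unique path from the root of $\cT_\zeta$ to $v$. Crucially, because the G-coloring of $\cT_\zeta$ is fixed, $g_\zeta(v)$ is a function of $v$ alone (once $\zeta$ is fixed) and not of any particular input $x$ whose path visits $v$. This consistency is precisely what justifies the use of per-vertex weights: at any branching vertex $v$ shared by $P_x^\zeta$ and $P_y^\zeta$, both paths agree on the weight assigned to each outgoing edge of $v$.

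With $g_\zeta(\cdot)$ in hand, I would construct vectors $\ket{u_{xj}^\zeta}$ and $\ket{w_{xj}^\zeta}$ by the same formulas as in the proof of Theorem~\ref{thm:classical2quantum}(i), but with every occurrence of $W_\black$ (resp.\ $W_\red$) replaced by $W_{g_\zeta(v),\black}$ (resp.\ $W_{g_\zeta(v),\red}$) at the vertex $v$ in question. The feasibility check for the dual adversary SDP is identical to before: when $f_\zeta(x)\neq f_\zeta(y)$, there is a unique branching vertex $v\in P_x^\zeta\cap P_y^\zeta$, and the factor $1/\sqrt{W_{g_\zeta(v),c}}$ in $\ket{u_{xJ(v)}^\zeta}$ cancels against $\sqrt{W_{g_\zeta(v),c}}$ in $\ket{w_{yJ(v)}^\zeta}$ to give inner product $1$. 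Setting $\ket{u_{xj}} = \tfrac{1}{\sqrt K}\sum_\zeta \ket{u_{xj}^\zeta}\otimes\ket\zeta$ and similarly for $\ket{w_{xj}}$ as in~\eqref{eq:u-zeta}--\eqref{eq:w-zeta}, I get a feasible solution of the state-generation SDP~\eqref{SDP:state-generation-SDP} for the target vectors $\ket{\psi_x}$ from~\eqref{eq:psi-zeta}, and the state-generation-to-function-evaluation reduction at the end of the proof of Theorem~\ref{thm:classical2quantum}(ii) carries over unchanged.

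The new computation is the per-segment bookkeeping of the witness sizes. A vertex $v\in P_x^\zeta$ in segment $g_\zeta(v)=g$ contributes $1/W_{g,\black}$ (resp.\ $1/W_{g,\red}$) to the positive complexity when its outgoing edge along $P_x^\zeta$ is black (resp.\ red), and dually contributes $W_{g,\red}$ (resp.\ $W_{g,\black}+W_{g,\red}$) to the negative complexity. Grouping by segment, noting that $P_x^\zeta$ has $T_{g,x}^\zeta$ black edges in segment $g$ and one red edge leaving segment $g$ for each $0\le g< G_x^\zeta$, and then averaging over $\zeta$ using $\E_\zeta[T_{g,x}^\zeta]\le T_g$, I obtain
\begin{align*}
\sum_j\|\ket{u_{xj}}\|^2 &\;\le\; 4\sum_g \frac{T_g}{W_{g,\black}}+4\sum_{g=0}^{G-1}\frac{1}{W_{g,\red}},\\
\sum_j\|\ket{w_{xj}}\|^2 &\;\le\; 4\sum_g T_g\,W_{g,\red}+4\sum_{g=0}^{G-1}\bigl(W_{g,\black}+W_{g,\red}\bigr).
\end{align*}
Setting $W_{g,\black}=1/W_{g,\red}=\sqrt{T_g}$ (with the harmless convention $W_{g,\cdot}=1$ when $T_g=0$) balances the two expressions to $O\bigl(\sum_g\sqrt{T_g}\bigr)$, which produces the dual adversary objective value asserted by the theorem.

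The main obstacle is the bookkeeping itself, and specifically establishing that the segment index $g_\zeta(v)$ is a function of $v$ alone so that per-vertex weights are consistent across all inputs simultaneously. Everything else is a straightforward segment-wise refinement of the argument of Theorem~\ref{thm:classical2quantum}, and the boundary contributions from the red edges bridging consecutive segments are swallowed by the same balancing that handles the black edges within each segment.
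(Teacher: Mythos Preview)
Your proposal is correct and follows essentially the same route as the paper: replace the fixed pair $W_\black,W_\red$ in the dual-adversary vectors of Theorem~\ref{thm:classical2quantum} by segment-dependent weights $W_{g(v),\black},W_{g(v),\red}$ indexed by the number $g(v)$ of red edges on the root-to-$v$ path (which, as you correctly stress, depends only on $v$ and $\zeta$), then average over $\zeta$ and balance. The one difference is at $g=0$: the paper sets $W_{0,\black}=T_0$ (rather than your $\sqrt{T_0}$) so that the zeroth segment contributes $O(1)$ to the positive witness size and the final sum starts at $g=1$, whereas your uniform choice yields $O\bigl(\sum_{g\ge 0}\sqrt{T_g}\bigr)$, off from the stated bound by the single extra term $\sqrt{T_0}$.
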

\begin{proof} 
The proof is similar to the proof of Theorem~\ref{thm:classical2quantum} except that we pick different weights for edges of the decision trees. Using the notations we used before, for any choice of $\zeta$ and its associated decision tree $\cT_\zeta$ define
\begin{equation}\label{eq:u-T-i}
\ket{u_{xj}^\zeta}=\left\lbrace
\begin{array}{ll}
\frac{1}{\sqrt{W_{g(v),C(v,Q_v(x_j))}}}\big|{v , C(v,Q_v(x_j))}\big\rangle \otimes \ket{\mu_{Q_v(x_j)}} \otimes \ket{\tilde \mu_{f_\zeta(x)}} & \text{if } \exists v\in P^\zeta_{x}\cap V_j(\cT_\zeta) \\ 
0 & {\rm otherwise,}
\end{array} 
\right. 
\end{equation}
and
\begin{equation}\label{eq:w-T-i}
\ket{w_{xj}^\zeta}=\left\lbrace
\begin{array}{ll}
 \sum_{c\in C_{v, xj}}\sqrt{W_{g(v),c}}\ket{v, c} \otimes \ket{\nu_{Q_v(x_j)}}\otimes \ket{\tilde \nu_{f_\zeta(x)}} & \text{if }\exists v\in P_{x}^\zeta\cap V_j(\cT_\zeta) \\ 
0 & {\rm otherwise,}
\end{array} 
\right. 
\end{equation}
where as before $C_{v, xj}$ is given by~\eqref{eq:def-C-xj}, and $g(v)$ is the number of red edges in the path  from the root of $\cT_\zeta$ to $v$. Moreover, $W_{g, \black}, W_{g, \red}$, for any $g\geq 0$, are positive weights to be determined.  
As before, these vectors form a feasible solution of the SDP\eqref{SDP:dual-SDP} for the function $f_\zeta$. 
Then we define vectors $\ket{u_{xj}}$, $\ket{w_{xj}}$ and $\ket{\psi_x}$ as in~\eqref{eq:u-zeta}, \eqref{eq:w-zeta} and~\eqref{eq:psi-zeta}.
As before, we obtain a feasible solution to the SDP~\eqref{SDP:state-generation-SDP} whose objective value is an upper bound on the quantum query complexity of $f$. We estimate the objective value as follows. 

Let $W_g=W_{g,{\rm black}}=\frac{1}{W_{g,{\rm red}}},$ then
 \begin{align*}
\sum_{j=1}^n \|\ket{u_{xj}}\|^2 
 &=\frac{1}{K} \sum_\zeta \sum_{j=1}^n \big\|\big|{u^{\zeta}_{xj}}\big\rangle\big\|^2 \\&
\leq 4\frac{1}{K} \sum_\zeta \left(\frac1{W_{0}}T_{0,x}^\zeta +\sum_{g=1}^{G} \left(\frac 1{W_{g}}T_{g,x}^\zeta +W_{g}\right) \right)\\
&
=  4\left(\frac1{W_{0}}T_{0,x} +\sum_{g= 1}^G \left(\frac 1{W_{g}}T_{g,x} +W_{g} \right)\right)\\
&\leq  4\left(\frac 1{W_{0}} T_0+ \sum_{g= 1}^G \left(\frac1{W_{g}}T_g+W_{g}\right)\right).
\end{align*} 
Then letting $W_0=T_0$ and $W_g=\sqrt{T_g}$ for $g\geq 1$ we obtain\footnote{Note that $T_g\neq 0$ since every internal vertex of a decision tree has an outgoing black edge.}
\begin{equation*}
\sum_{j=1}^n \|\ket{u_{xj}}\|^2 
=O \left(\sum_{g= 1}^G \sqrt{T_{g}}\right).
\end{equation*}
We similarly obtain the same upper bound on $\sum_{j=1}^n \|\ket{w_{xj}}\|^2$. Then the quantum query complexity of $f$ is bounded by $O \left(\sum_{g= 1}^G \sqrt{T_{g}}\right)$.
\end{proof}

\section{Applications}

We can use our main result, Theorem~\ref{thm:classical2quantum}, to simplify the proof of some known quantum query complexity bounds as well as to derive new bounds. We start with some simple examples. 

\begin{proposition}\label{pro:example1} Suppose that we have query access to a list $x=(x_1,x_2,\ldots ,x_n)\in [\ell]^n$. Also let $q\in[\ell]$ and $1\leq k<n$ be fixed. 
\begin{enumerate}[label=(\roman*)]
\item[\rm{(i)}] \textsc{[counting]} The quantum query complexity of finding \emph{all} input indices with values equal to $q$ is $O(\sqrt{rn})$, where $\left|\big\{j:\,x_j=q\big\}\right| \leq r$.

\item[\rm{(ii)}] \textsc{[k-threshold]} The quantum query complexity of deciding whether $\left|\big\{ j:\,x_j=q\big\}\right|\leq k$ or not is $O(\sqrt{kn})$.
\end{enumerate}
\end{proposition}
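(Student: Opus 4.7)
The plan is to apply Theorem~\ref{thm:classical2quantum}(i) in both parts using a simple sequential decision tree together with a natural guessing algorithm. Specifically, the tree queries the indices $1, 2, \ldots, n$ in order, and at every internal vertex associated to index $j$ it uses the binary partition $\{\{q\},\, [\ell]\setminus\{q\}\}$ of $[\ell]$. The guessing algorithm always predicts ``$x_j \neq q$'', so I color the outgoing edge with label $[\ell]\setminus\{q\}$ black and the edge with label $\{q\}$ red; Definition~\ref{def:G-coloring} is trivially satisfied since each internal vertex has exactly one black outgoing edge. Along any root-to-leaf path, the red edges correspond exactly to the indices where $x_j = q$.

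For part (i), the tree continues querying until all $n$ indices have been examined, and I label each leaf by the subset of indices whose queries produced a red edge; this subset is exactly $\{j : x_j = q\}$, so the tree evaluates $f$. The depth is $T = n$, and since there are at most $r$ indices with $x_j = q$, the number of red edges on any root-to-leaf path is at most $r$, giving $G \leq r$. Theorem~\ref{thm:classical2quantum}(i) then yields the bound $O(\sqrt{rn})$.

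For part (ii), I modify the tree so that a vertex becomes a leaf labeled ``$> k$'' as soon as the path from the root accumulates $k+1$ red edges; otherwise the tree keeps querying, and the final leaf at depth at most $n$ is labeled ``$\leq k$'' or ``$> k$'' according to the number of red edges seen. By construction the depth is at most $n$ and any root-to-leaf path has at most $k+1$ red edges, so $T \leq n$ and $G \leq k+1$, and Theorem~\ref{thm:classical2quantum}(i) delivers $O(\sqrt{kn})$. I do not expect a substantial obstacle: the only points to verify are that each constructed tree really evaluates the corresponding $f$ and that the G-coloring is valid, both of which are immediate from the construction. This mirrors how the proof that $Q(\mathrm{OR}_n) = O(\sqrt{n})$ arises from the trivial classical algorithm for search, which is the $\ell = 2$, $r = 1$, $k = 0$ special case.
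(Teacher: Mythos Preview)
Your proposal is correct and follows essentially the same approach as the paper: both construct the sequential decision tree with the binary partition $\{\{q\},[\ell]\setminus\{q\}\}$ at every vertex, color the $\{q\}$-edge red and the other black, and invoke Theorem~\ref{thm:classical2quantum}(i) with $T=n$ and $G\leq r$ (respectively $G\leq k+1$). The only cosmetic difference is that in part~(ii) the paper stops after $k$ red edges while you stop after $k+1$; your version is in fact the accurate one for deciding ``$\leq k$'' versus ``$>k$'', and the discrepancy is immaterial for the $O(\sqrt{kn})$ bound.
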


It is shown that the quantum query complexity of 
 counting equals $\Theta(\sqrt{rn})$~\cite{BHT98}. Also it is well-known that the $k$-threshold problem has quantum query complexity $O(\sqrt{kn})$. 

\begin{proof}
\rm{(i)} In order to use Theorem~\ref{thm:classical2quantum} we first need a classical query algorithm. Suppose that we start from the first index and query all the indices one by one. We then output the set of indices $j$ with $x_j=q$. Next we need a G-coloring. To this end, observe that the algorithm is ignorant of the exact value of some index $x_j$ once it makes sure that $x_j\neq q$. Thus is the associated decision tree $\cT$ we can unify all outgoing edges of a vertex with label $q'\neq q$. That is, in $\cT$  there are two outgoing edges for any vertex that are labeled by $\{q\}$ and $[\ell]\setminus\{q\}$. Now we color all edges with label $\{q\}$ red and color the edges with label $[\ell]\setminus\{q\}$ black. In this coloring there are at most $r$ red edges in any path from the root to leaves: $G=r$. The depth of the decision tree is $T=n$. As a result the quantum query complexity of quantum counting is $O(\sqrt{rn})$.
\vspace{10pt}
\newline 
\rm{(ii)} The proof is similar to that of part (i). In the classical algorithm we query indices one by one until we find $k$ indices $j$ with $x_j=q$. Then in $\cT$ we unify edges with label $q'\neq q$ and color them black, and color edges with label $\{q\}$ red. As the algorithm stops once it faces $k$ indices with value $q$, the number of red edges in any path in $\cT$ from the root to leaves is at most $G=k$. Also the depth of the tree is $T=n$. Therefore the quantum query complexity of the threshold problem is  $O(\sqrt{kn})$. 
\end{proof}

\begin{proposition}\label{prop:min}
Let $x=(x_1, \dots, x_n)$ be a list of $n$ numbers. 
\begin{itemize}
\item[\rm{(i)}]\textsc{[min]} The quantum query complexity of finding $\min_j x_j$ is bounded by
$O(\sqrt{n \log n})$.

\item[\rm{(ii)}]\textsc{[k-min]} The problem of finding a subset $S\subseteq \{1, \dots, n\}$ of size $|S|=k$ such that for all $j\notin S$ we have $x_j\geq \max_{i\in S} x_i$ has quantum query complexity $O(\sqrt{kn\log n })$.

\end{itemize}
\end{proposition}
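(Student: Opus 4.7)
The plan is to apply Theorem~\ref{thm:classical2quantum}(ii) to a randomized scan whose \emph{randomness comes from a uniformly random ordering of the indices}, which will give the desired $\sqrt{\log n}$ factor via a harmonic-sum argument for the expected number of red edges.

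For part (i), I would describe the classical algorithm as follows: draw a uniformly random permutation $\pi$ of $\{1,\dots,n\}$ (this is the randomness $\zeta$), query $x_{\pi(1)}, x_{\pi(2)}, \dots, x_{\pi(n)}$ in order, and maintain the current running minimum $M_i$ after the $i$-th query. In the associated generalized decision tree $\cT_\zeta$, at the vertex reached after $i-1$ queries the two outgoing edges are labelled $\{v : v \geq M_{i-1}\}$ and $\{v : v < M_{i-1}\}$; the guessing algorithm predicts ``not a new minimum'', so we color the first edge black and the second red. Thus every path has depth exactly $n$, i.e.\ $T_x^\zeta = n$. A red edge is traversed at step $i$ precisely when $x_{\pi(i)}$ is the minimum of $\{x_{\pi(1)},\dots,x_{\pi(i)}\}$. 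Since $\pi$ is uniformly random and (assuming distinct values for simplicity, or after a standard tie-breaking) the rank of $x_{\pi(i)}$ among the first $i$ queried values is uniform in $[i]$, the probability of this event is $1/i$. Hence $\mathbb{E}_\zeta[G_x^\zeta] = \sum_{i=1}^n 1/i = H_n = O(\log n)$ for every $x$, so $G = O(\log n)$ and $T = n$. Theorem~\ref{thm:classical2quantum}(ii) then gives $Q = O(\sqrt{TG}) = O(\sqrt{n\log n})$.

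For part (ii), I would use the same random-permutation scan, but now maintain the set $S_i$ of indices of the $k$ smallest values among $\{x_{\pi(1)},\dots,x_{\pi(i)}\}$ (once $i\geq k$), along with the threshold $\tau_i = \max_{j\in S_i} x_j$. The decision tree at step $i > k$ splits on whether $x_{\pi(i)} \geq \tau_{i-1}$ (black, guessed) or $x_{\pi(i)} < \tau_{i-1}$ (red); for $i \leq k$ the guess is moot and those edges can all be colored red trivially, contributing at most $k$ red edges per path. Again $T_x^\zeta = n$. For $i>k$ the probability that $x_{\pi(i)}$ enters the current top-$k$ equals the probability that its rank among $\{x_{\pi(1)},\dots,x_{\pi(i)}\}$ lies in $\{1,\dots,k\}$, which by symmetry of $\pi$ is $k/i$. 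Summing, $\mathbb{E}_\zeta[G_x^\zeta] \leq k + \sum_{i=k+1}^n k/i = O(k\log n)$, so $G = O(k\log n)$ and Theorem~\ref{thm:classical2quantum}(ii) yields $Q = O(\sqrt{kn\log n})$. The main obstacle is really just the probabilistic bookkeeping to verify the $k/i$ claim rigorously (handling ties between coordinate values, which can be resolved by a standard fixed tie-breaking rule on indices); once that is done the application of the theorem is immediate.
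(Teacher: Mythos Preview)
Your approach is essentially the same as the paper's: a uniformly random scan with the ``not a new record'' guess, and a harmonic-sum bound on the expected number of records. The paper derives $\mathbb{E}[G_x^\zeta]=H_n$ via the recurrence $G_n=G_{n-1}+1/n$ (respectively $G_n=G_{n-1}+k/n$ for part~(ii)) obtained by conditioning on the position of the largest element, while you use the equivalent direct indicator computation $\Pr[\text{record at step }i]=1/i$ (respectively $k/i$).

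One technical slip: you describe each internal vertex of $\cT_\zeta$ as having exactly \emph{two} outgoing edges, labelled $\{v:v\geq M_{i-1}\}$ and $\{v:v<M_{i-1}\}$. In a generalized decision tree the subtree hanging below an edge is shared by all answers in that edge's label, so if you merge all values below $M_{i-1}$ into a single edge the tree no longer records the new running minimum and therefore cannot output $\min_j x_j$ (or the correct set $S$) at its leaves. The paper keeps a separate red outgoing edge for each individual $q<M_{i-1}$ and merges only the values $\{q:q\geq M_{i-1}\}$ into one black edge. This fix does not change the number of red edges traversed along any root-to-leaf path (at each step at most one outgoing edge is followed), so your bounds $G=O(\log n)$ and $G=O(k\log n)$ and the conclusions remain intact once you make it.
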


Two remarks are in line regarding the examples of minimum finding.  First, our bounds in these examples are tight only up to a factor of $\sqrt{\log n}$~\cite{Durr1996,DHHM04}. 
Yet, we would like to present these results since they show how randomization (part (ii) of Theorem~\ref{thm:classical2quantum}) may help to improve upper bounds on the quantum query complexity.

Second, observe that a list of numbers may have several minimums, so the problems in this proposition are not really function problems. To turn them into functions we may assume that our goal is to find the minimum number in the list whose index is also minimum. In other words, we consider a new order $``\prec"$ such that $x_i\prec x_j$ if $x_i<x_j$, or if $x_i=x_j$ and $i< j$. Now the minimum in this order is unique and we may ask for finding it. 

\begin{proof}
\rm{(i)} Consider the randomized classical algorithm that queries all indices one by one in a random order. The algorithm keeps a candidate for minimum at each step, and updates it once it reaches a smaller number.  Observe that this algorithm is ignorant of the exact answer to a query once it makes sure that it is not smaller than the current candidate for minimum.  Thus in the associated decision tree (for any choice of random order $\zeta$), at any internal vertex $v$ we can unify outgoing edges with label in $\{ q :\; q \geq m_v\}$ where $m_v$ is the candidate for minimum at node $v$. Thus in $\cT_\zeta$ any internal vertex $v$ has an outgoing edge with label $\{ q :\; q\geq m_v\}$ and an outgoing edge for any other  $q<m_v$. The former edge is colored black and the latter edges are colored red. The depth of $\cT_\zeta$ equals $T=n$ for any $\zeta$. However, for a given $x$, $G_x^\zeta$ depends on $\zeta$, so we should compute
$$G=\max_x \E_\zeta [G_x^\zeta].$$
We claim that $G=O(\log n)$. Intuitively speaking, the expected number of $x_j$'s that are smaller than the first queried element is $n/2$, and the guessing algorithm does not make mistakes once we query such $x_j$'s. Thus, after the first query, in expectation, half of the $x_j$'s would become irrelevant in computing $G$. Repeating this argument, we obtain $G=O(\log n)$. Below we present a more precise argument for this claim.

We can assume with no loss of generality that $x_1< \cdots < x_n$, since in the beginning of the algorithm we apply a random permutation. If in the random permutation $\zeta = (\zeta(1), \dots, \zeta(n))$ the first element is $n$, i.e., $\zeta(1)=n$, then $G_n^{\zeta} = G_{n-1}^{\zeta'} +1$ where $\zeta'=(\zeta(2), \dots, \zeta(n))$. Otherwise, if $\zeta(1)\neq n$ then $G_n^{\zeta} = G_{n-1}^{\zeta''}$ where $\zeta''$ is the same order as $\zeta$ from which $n$ is removed. We conclude that 
$$\E[G_n^\zeta] = \frac{1}{n} \big(\E\big[G_{n-1}^{\zeta'}\big]+1\big) + \frac{n-1}{n} \E\big[G_{n-1}^{\zeta''}\big].$$
Therefore, letting $G_n=\E[G_n^\zeta]$ we have
$$G_n = G_{n-1} + \frac 1 n.$$
Using $G_1=1$ we obtain 
$$G_n = \sum_{t=1}^n \frac 1 t= O(\log n).$$

As a result, $G=O(\log n)$ and by  Theorem~\ref{thm:classical2quantum} the quantum query complexity of finding the minimum is bounded by $O(\sqrt{n \log n})$.
\vspace{10pt}
\newline 
\rm{(ii)} The proof is similar to that of part (i). Again we read the numbers in a random order and update a $k$-list as our candidate for $S$ as we reach a number that is smaller than all the number in the list. The associated decision tree and its G-coloring is as before. Again we would have $T=n$. Also by similar ideas as in the proof of part (i) it can be shown that $G_n=G_{n-1}+k/n$ because with probability $k/n$ the largest $x_j$ appears in the first $k$ numbers in a random permutation. Therefore,
 $G=\max_x \E_\zeta[G^\zeta_x]=O(k\log n)$. 
We conclude that the quantum query complexity of finding the $k$ smallest numbers is bounded by $O(\sqrt{kn \log n})$.

\end{proof}
\color{black}

Motivated by Proposition~\ref{pro:example1} we can state the following general upper bound on the quantum query complexity of functions.

\begin{corollary}\label{cor:sparseProperty}
For any \emph{partial} function  $f:D_f\to[m]$ where $D_f\subseteq [\ell]^n$ and $\forall q\in [\ell]$, let 
$$r_{q}(x):=\big|\{j :\, x_j\neq q\}\big|  \qquad \text{ and } \qquad g= \min_{q\in [\ell]}\max_{x\in D_f} r_{q}(x).$$ 
Then if the classical query complexity of $f$ is $T$, the quantum query complexity of $f$ is $O(\sqrt{gT})$. In particular, the quantum query complexity of $f$ is $O(\sqrt{gn})$.
\end{corollary}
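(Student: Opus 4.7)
My plan is to apply Theorem~\ref{thm:classical2quantum}(i) to an optimal classical decision tree for $f$, endowed with a G-coloring built from a very simple guessing rule: always predict the value $q^\star \in [\ell]$ that achieves the outer minimum in the definition of $g$. By construction $r_{q^\star}(x) \le g$ for every $x \in D_f$, so along any root-to-leaf path this trivial guessing rule errs at most $g$ times, which is exactly the G-coloring data the theorem needs.

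More concretely, let $\cT$ be a classical decision tree of depth $T$ computing $f$. Its internal vertices have out-degree at most $\ell$, with each outgoing edge labeled by a single element of $[\ell]$; these singletons form a (trivial) partition of $[\ell]$, so $\cT$ qualifies as a generalized decision tree in the sense of the paper. At every internal vertex $v$ I color the outgoing edge with label $\{q^\star\}$ black (if such an edge is present), and all other outgoing edges red. Each vertex thus has at most one black outgoing edge, satisfying Definition~\ref{def:G-coloring}.

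It then remains to bound the number of red edges on any root-to-leaf path. For $x \in D_f$ the path $P_x$ visits distinct query indices, and an edge of $P_x$ is red precisely when the corresponding answer $x_{J(v)}$ differs from $q^\star$; therefore the number of red edges on $P_x$ is at most $|\{j : x_j \neq q^\star\}| = r_{q^\star}(x) \le g$. Plugging this into Theorem~\ref{thm:classical2quantum}(i) yields a quantum query algorithm with complexity $O(\sqrt{gT})$. The second assertion $O(\sqrt{gn})$ follows by running the same argument on the trivial decision tree of depth $n$ that queries every coordinate in turn. Since the only nontrivial step is spotting that the guessing rule "always output $q^\star$" turns the sparsity parameter $g$ into a bound on mistakes, I do not foresee any genuine obstacle in the proof.
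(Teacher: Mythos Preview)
Your proposal is correct and follows essentially the same route as the paper: pick the optimal $q^\star$ (the paper calls it $q_0$), color the $q^\star$-edge black at every internal vertex and all other edges red, and invoke Theorem~\ref{thm:classical2quantum}(i) to obtain $O(\sqrt{gT})$. The paper derives the $O(\sqrt{gn})$ bound simply from $T\leq n$, while you instantiate the trivial depth-$n$ tree; both are fine.
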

\begin{proof}
We prove this corollary using Theorem~\ref{thm:classical2quantum}. Given the classical algorithm for $f$, for a G-coloring of the edges of the associated decision tree, color every edge of the decision tree with label $q_0$ black and the rest of the edges red, where $q_0$ is such that $g=\max_{x\in D_f} r_{q_0}(x)$. Then since each $x\in D_f$ contains at most $g$ indices with values $q_0$, in every path from the root to leaves of the decision tree we see at most $G=g$ red edges. Then the quantum quantum query complexity of $f$ is $O(\sqrt{GT}) = O(\sqrt{gT})$.
\end{proof}

\subsection{Graph properties in the adjacency matrix model}
In this subsection and the following one we use Theorem~\ref{thm:classical2quantum} to prove quantum query complexity upper bounds on some graph theoretic problems. 
In this subsection, we assume that the graph is given in the \emph{adjacency matrix model}, by which we mean that the queries are from the entries of the adjacency matrix of the graph. That is, given vertices $u, v$ of the graph, we may ask whether there is an edge between $u$ and $v$ or not. Sometimes we assume that the underlying graph is directed in which case we ask whether there is a \emph{directed} edge from $u$ to $v$.

Inspired by the ideas in~\cite{LL16}, we make use of the well-known Breadth First Search algorithm (BFS, see Algorithm~\ref{alg:BFS}) as our starting point for designing classical algorithms for some graph theoretic problems.  The point of the BFS algorithm is that it returns a spanning tree (forest), with at most $n-1$ edges, of the underlying graph. Thus if we always guess that there is no edge between two queried vertices, we make at most $n-1$ mistakes.

\begin{algorithm}[ht]
\caption{ BFS$(\cG)$: breadth first search algorithm on graph $\cG$}
\label{alg:BFS}
\begin{algorithmic}[1]
\State Let $L$ be a list of unprocessed vertices and $Q$ be a first in first out queue. \label{algline-queue}
\State $L\leftarrow V(\cG)$, $Q=\emptyset$, $E_\cS=\emptyset$ \Comment $E_\cS$ stores the edge set of the BFS tree.
\While{there exists a $v'\in L$}
	\State add $v'$ to $Q$ 
	\State $L\leftarrow L- v'$ 
	\While{$Q \neq \emptyset$}
		\State $u\leftarrow {\rm dequeue}(Q)$
		\While{there exists a $v\in L$}
			\State Query $(u,v)$ 
			\If {$(u,v)\in E(\cG)$}
				\State add $(u,v)$ to $E_\cS$
				\State add $v$ to $Q$  \label{algline:BFS-add2Q}
				\State $L\leftarrow L- v$ 
			\EndIf
		\EndWhile
	\EndWhile
\EndWhile
\State \Return the BFS forest $\cS=\big(V(\cG),E_\cS\big)$
\end{algorithmic}
\end{algorithm}

\begin{proposition}
Suppose that we have query access to the adjacency matrix of a simple\footnote{We can derive the same results for non-simple graphs by making minor modifications in the proofs.} (possibly directed) graph $\cG$ on $n$ vertices. Then the followings hold.
\begin{enumerate}
\item[\rm{(i)}] \textsc{[bipartiteness]}\label{ex:bipartiteness} The quantum query complexity of deciding whether $\cG$ is bipartite or not is $O(n^{3/2})$.
\item[\rm{(ii)}] \textsc{[cycle detection]}\label{ex:cycle} The quantum query complexity of deciding whether $\cG$ is a forest or has a cycle is $O(n^{3/2}).$

\item[\rm{(iii)}] \textsc{[directed st-connectivity]}\label{ex:dir-st-con} The quantum query complexity of finding a shortest path (the path that consists of the least number of edges) between two vertices $s$ and $t$ in $\cG$ is $O(n^{3/2})$. This holds for either directed or undirected graphs.

\item[\rm{(iv)}]  \textsc{[smallest cycles containing a vertex]} 
The quantum query complexity of finding the length of the smallest \emph{directed} cycle containing a given vertex $v$ in a directed graph $\cG$ is $\Theta(n^{3/2})$. 

\item[\rm{(v)}]  \textsc{[$k$-cycle containing a vertex]}  The quantum query complexity of deciding whether $\cG$ has a cycle of length $k$, for a fixed $k$, containing a given vertex $v$ is $O((2k)^{(k-1)}n^{3/2})$.  

\end{enumerate}
\end{proposition}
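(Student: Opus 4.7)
The unified strategy for all five parts is to apply Theorem~\ref{thm:classical2quantum} to a classical algorithm built on the breadth-first search of Algorithm~\ref{alg:BFS}, equipped with the G-coloring in which the guessing algorithm predicts ``no edge'' at every query. Under this coloring, red edges in the decision tree correspond exactly to those entries of the adjacency matrix that the algorithm discovers to be actual edges of $\cG$. Because BFS queries each pair at most once and produces a spanning forest with at most $n-1$ edges, the baseline parameters are $T = O(n^2)$ and $G \leq n-1$, so $\sqrt{TG}$ already yields $O(n^{3/2})$; the task in each part is to bolt on the problem-specific post-processing without inflating $G$.

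For parts (i)--(iii) the arguments share the same template. For (i) I would run BFS to produce a spanning forest $\cS$ and record BFS levels, then in a second phase query every not-yet-queried pair of vertices at levels of the same parity, stopping at the first edge found; a bipartite graph contributes zero mistakes in this phase (since bipartiteness is equivalent to the absence of same-parity edges), while a non-bipartite graph contributes exactly one before halting, giving $G \leq n$. For (ii) the second phase queries all remaining pairs and halts at the first non-tree edge discovered: a forest adds zero mistakes, a non-forest one. For (iii) only BFS from $s$ is needed (using out-arcs in the directed case); the BFS arborescence has at most $n-1$ edges, so $G \leq n-1$, and the BFS level of $t$ is the shortest-path distance, while the path itself is encoded by parent pointers.

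For part (iv), I would use the identity that the length of the smallest directed cycle through $v$ equals $\min_{u\neq v}\bigl(d_\cG(v,u) + d_\cG(u,v)\bigr)$, and run two BFS traversals rooted at $v$: one forward for $d_\cG(v,\cdot)$, and one in the reversed graph for $d_\cG(\cdot,v)$, implemented by querying column $v$ of the adjacency matrix instead of row $v$. Each traversal has $T = O(n^2)$ and $G \leq n-1$, so their combined complexity via Theorem~\ref{thm:classical2quantum} is $O(n^{3/2})$. For the matching lower bound $\Omega(n^{3/2})$, I would reduce from a problem whose adjacency-matrix quantum query complexity is already known to be $\Omega(n^{3/2})$, planting gadgets that force any cycle through $v$ to witness a specific directed $st$-connection.

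Part (v) is where I expect the main obstacle. The plan is a depth-bounded enumeration: maintain, for $i=0,1,\dots,k-1$, the set $\Pi_i$ of simple paths of length $i$ starting at $v$ discovered so far, initialized with $\Pi_0=\{(v)\}$; at step $i$, for every $\pi \in \Pi_{i-1}$ with current endpoint $u$ and every $w \notin V(\pi)$, query $(u,w)$ if not already queried, and append $w$ to $\pi$ inside $\Pi_i$ whenever an edge is found; finally check, for each $\pi \in \Pi_{k-1}$ with endpoint $u$, whether $(u,v)$ is an edge. Each pair is queried at most $O(k)$ times, so $T = O(kn^2)$. The delicate step is bounding $G$: by an inductive combinatorial argument on the branching at each level one shows $|\Pi_i| \leq (2k)^{i-1}\,n$, whence the total number of discovered edges satisfies $G = O\bigl((2k)^{k-1}\,n\bigr)$. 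Plugging into Theorem~\ref{thm:classical2quantum} gives the claimed $O\bigl((2k)^{k-1} n^{3/2}\bigr)$ bound, with the constant absorbing the $\sqrt{k}$ from $T = O(kn^2)$.
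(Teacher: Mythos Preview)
Your treatment of parts (i)--(iv) is correct and matches the paper's approach. The only cosmetic difference is in (iv): the paper does a single BFS from $v$ and, upon discovering each vertex $u$, queries the back-arc $(u,v)$; you instead run two BFS passes (forward and reversed) and minimize $d(v,u)+d(u,v)$. Both give $T=O(n^2)$, $G=O(n)$, and your lower-bound sketch via directed $st$-connectivity is exactly what the paper does (add a new vertex $w$ with arcs $(w,s)$ and $(t,w)$).

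Part (v), however, has a real gap. Your claimed bound $|\Pi_i|\le (2k)^{i-1}n$ is false: on the complete graph the number of simple length-$i$ paths from $v$ is $(n-1)(n-2)\cdots(n-i)=\Theta(n^i)$. More to the point, your $G$ cannot be $O((2k)^{k-1}n)$. Already for $k=3$ on $K_n$: step~1 discovers $n-1$ edges $(v,w)$; step~2 then queries every pair $(w,w')$ with $w,w'\neq v$, all of which are edges, so you discover $\Theta(n^2)$ further edges. Thus $G=\Theta(n^2)$ and $\sqrt{TG}=\Theta(n^2)$, which is trivial. No inductive argument on branching can rescue this, because the branching factor at each level is genuinely $\Theta(n)$ in dense graphs.

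The paper avoids this by \emph{color-coding}: choose random $C:V\to[k]$ and $D:E\to\{-1,+1\}$, and keep only arcs $(u,w)$ with $C(u)\equiv C(w)+1\pmod k$ and $D(u,w)=+1$. In the resulting digraph every cycle has length a multiple of $k$, so the shortest cycle through $v$ has length exactly $k$ if one exists, and the algorithm of part~(iv) detects it in $O(n^{3/2})$ queries. A fixed $k$-cycle through $v$ survives with probability at least $(2k)^{-(k-1)}$, so $O((2k)^{k-1})$ independent repetitions suffice, yielding the stated $O((2k)^{k-1}n^{3/2})$. The key point you are missing is a randomized filtering step that collapses the search for a $k$-cycle to a shortest-cycle computation; brute path enumeration cannot achieve $G=O(n)$.
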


The problem of bipartiteness has been first shown in~\cite{Ari15} to have quantum query complexity $O(n^{3/2})$, which is shown to be tight in~\cite{Zhang05}. 
An algorithm for the problem of cycle detection with $O(n^{3/2})$ queries is proposed in~\cite{CMB16} that works by reducing the problem to the \emph{st-connectivity} problem. This upper bound is known to be tight~\cite{ChK10}.
For the directed st-connectivity problem, it has been first shown to have query complexity $\Theta(n^{3/2})$ in~\cite{DHHM04}. There exists a quantum query algorithm for deciding whether $\cG$ contains a cycle of length less than $k$ containing a given vertex $v$ with query complexity $O(n\sqrt{k})$~\cite{CMB16}. For a list of related algorithms on cycle detection consult~\cite{Cira06}.

We would like to remark that the space complexity of all BFS/DFS-based quantum query algorithms in this subsection and the next one are linear in the size of the input graph. This is because our algorithms are based on feasible solutions of the dual adversary SDP that are obtained from a generalized decision tree. Now the point is that the 
space complexity of such an algorithm equals the \emph{logarithm} of the dimension of the vectors in the feasible solution of the dual adversary SDP, that itself equals the size of the decision tree which is exponential.

\begin{proof}
\rm{(i)} A graph $\cG$ is bipartite iff its vertices can be properly colored with two colors blue and green (such that no two adjacent vertices have the same color). Here is a classical algorithm to solve bipartiteness. We run the BFS algorithm (Algorithm~\ref{alg:BFS}) that outputs a spanning forest $\cS$ of $\cG$. Then we color every vertex of $\cG$ with odd depth in $\cS$  blue, and every vertex of $\cG$ with even depth in $\cS$  green. After this coloring, we search for an edge between two vertices with the same color in $\cG$. If no such edge exists, then $\cG$ is bipartite.
   
In order to use Theorem~\ref{thm:classical2quantum}, in the associated decision tree $\cT$ of the above algorithm, color every outgoing edge of $\cT$ with label $1$ red, and the rest of edges black. The depth of the decision tree is $T\leq n^2$ as the total number of possible queries (possible edges) for $\cG$ is $n(n-1)/2$. Also, by the above coloring of edges of $\cT$, we see at most  $n$ red edges in every path from the root to leaves of $\cT$. Indeed, we see at most $n-1$ red edges once we build the spanning forest $\cS$, and at most $1$ red edge once we search for an edge in $\cG$ between vertices with the same parity depths. Thus $G\leq n$ and the quantum query complexity of  bipartiteness is at most $O(\sqrt{GT}) =O(n^{3/2}).$ 
\vspace{10pt}
\newline 
\rm{(ii)} In a classical algorithm for this problem we first build a BFS forest and then search for an edge in the whole graph that does not belong to the BFS forest. If such an edge exists it should belong to a cycle in $\cG$. In order to use Theorem~\ref{thm:classical2quantum}, in the associated decision tree $\cT$, as before, we color every edge of $\cT$ with label $0$ black, and edges with label $1$ by red. The depth of the decision tree is $T\leq n^2$, and using this coloring in every path from the root to leaves of the decision tree there are at most $G= n$ red edges. Therefore, the quantum query complexity of the cycle detection problem is $O(n^{3/2}).$  
\color{black} 
\vspace{10pt}
\newline 
\rm{(iii)} Again we run the BFS algorithm on $\cG$ starting from vertex $s$ to build a subtree $\cS$ of $\cG$ with root $s$. Then a shortest path from $s$ to $t$, if exists,  belongs to $\cS$, and can be found once we have $\cS$.  The depth of the associated decision tree is $T=n^2$. For the G-coloring, as before, we color every edge with label 0 black and other edges red to get $G=n$. Then the  quantum query complexity of directed st-connectivity is $O(\sqrt{GT})=O(n^{3/2})$.
\vspace{10pt}
\newline 
\rm{(iv)}
In a classical algorithm for this problem we may run the BFS algorithm starting from vertex $v$. In parallel, whenever we reach a new vertex $u$ we query if there is an edge from $u$ to $v$. Finding such an edge corresponds to a smallest cycle containing $v$. 
As previous examples for a G-coloring of the associated decision tree, we color every edge with label 0 black and other edges red, then we have $G=n$ and $T=n^2.$ Therefore, the quantum query complexity of deciding whether $\cG$ has a cycle containing $v$ is $O(n^{3/2})$.

To prove the optimality of this bound we reduce the problem of directed st-connectivity which has query complexity $\Omega(n^{3/2})$ to this problem. Assume that we are given a graph $\cG$ and two distinguished vertices $s,t\in V(\cG)$, and we want to decide whether $s$ is connected to $t$ by a directed path or not. To solve this problem we build an auxiliary graph $\cH$ form $\cG$ as follows.
\begin{equation*}
V(\cH)=V(\cG)\cup \{w\}, \qquad
E(\cH)=E(\cG)\cup \{(w,s),(t,w)\}
\end{equation*}
Now $s$ is connected to $t$ in $\cG$ if and only if there is a directed cycle in $\cH$ containing the vertex $w$. Moreover, if such a cycle exists, its length equals the distance of $s, t$ in $\cG$ plus two. 
\vspace{10pt}
\newline 
\rm{(v)} In a classical algorithm for this problem, we first define an auxiliary directed graph $\cH$ out of $\cG$ with $V(\cG)=V(\cH)$.  To define the edge set of $\cH$ we use two random functions $C:V(\cG)\to [k]$ and $D:E(\cG)\to\{-1,+1\}$, and let
\begin{equation*}
E(\cH)=\big\{ (u,w)\in E(\cG)\,:\, C(u)=C(w)+1 (\mod k),\,  D(u,w)=+1\big\}.
\end{equation*}
Observe that if $\cG$ has a cycle of length $k$ containing $v$, then with probability at least $\frac{1}{(2k)^{k-1}}$, which is a constant, $\cH$ has a directed cycle of length $k$ containing $v$. Otherwise, $\cH$ does not have any cycle of length $k$ containing $v$.  
 Moreover, the length of all cycles of $\cH$ are multiples of $k$. Thus, the aforementioned cycle of $\cH$, if exists, is the smallest possible cycle. 
Then we can decide the existence of such a cycle using the algorithm of part~\rm{(iv)}. We can decide the existence of such a cycle with high probability by repeating the above algorithm $O\left((2k)^{(k-1)}\right)$ times.

\end{proof}

For the next set of examples we use the well-known classical algorithm Depth First Search (DFS). This algorithm builds a spanning forest of a given graph $\cG$. It is similar to the BFS algorithm but instead of using a queue which is a first in first out list, it uses a stack which is a first in last out list. This algorithm can also be implemented recursively (see Algorithm~\ref{alg:DFS}).

\begin{algorithm}[ht]
\caption{ DFS($\cG$): depth first search algorithm on graph $\cG$}
\label{alg:DFS}
\begin{algorithmic}[1]
\State let $L$ be a list of undiscovered vertices
\State let $ft$ be an array of size $|V(\cG)|$ \Comment{$ft$ stores the finishing time of vertices.}
\Function{DFS}{$\cG$}
\State $L\leftarrow V(\cG)$
\State $time=1$
	\While {there exists a $v\in L$}
		\State DFS$(\cG,v)$
	\EndWhile
	\State Return the DFS tree
\EndFunction

\Procedure{DFS}{$\cG,s$}
	\State $L\leftarrow L- s$
		\While{there exists a $v\in L$}
			\State Query $(s,v)$ 
			\If {$(s,v)\in E(\cG)$}
				\State DFS$(\cG,v)$
			\EndIf
		\EndWhile	
	\State $ft[s]\leftarrow time$
	\State $time\leftarrow time+1$
\EndProcedure
\end{algorithmic}
\end{algorithm}

\begin{proposition}\label{pro:DFS-based}
Suppose that we have query access to the adjacency matrix of a directed graph $\cG=(V,E)$ on $n$ vertices. Then the followings hold.
\begin{enumerate}
\item[\rm{(i)}] \textsc{[topological sort]}\label{ex:topsort} Suppose that $\cG$ is acyclic. Then the quantum query complexity of finding a vertex ordering of $\cG$ such that for all $(u,v)\in E$, $u$ appears before $v$ is $O(n^{3/2}).$

\item[\rm{(ii)}] \textsc{[connected components]}\label{ex:con-com} The quantum query complexity of determining connected components of $\cG$ is $O(n^{3/2})$. 

\item[\rm{(iii)}] \textsc{[strongly connected components]}\label{ex:str-con-com}
The quantum query complexity of finding strongly connected components of $\cG$ is $O(n^{3/2})$. Note that two vertices $u, v\in V$ belong to the same strongly connected component iff there exists a directed path from $u$ to $v$ and a directed path from $v$ to $u$ in $\cG$.
\end{enumerate}
\end{proposition}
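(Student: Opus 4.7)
The plan is to apply Theorem~\ref{thm:classical2quantum} with a single uniform G-coloring in all three parts: at every internal node of the decision tree below, the prediction is that the queried adjacency matrix entry equals $0$, so edges labeled $0$ are colored black and edges labeled $1$ are colored red. The common feature of DFS in the adjacency matrix model, as implemented in Algorithm~\ref{alg:DFS}, is that the algorithm queries $(s,v)$ only while $v$ is still undiscovered; consequently every ``yes'' answer strictly enlarges the DFS forest, and a spanning forest on $n$ vertices has at most $n-1$ edges. Hence the decision tree of any such DFS-based algorithm has depth $T = O(n^2)$ (at most one query per ordered pair of vertices) and any root-to-leaf path contains at most $O(n)$ red edges. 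Theorem~\ref{thm:classical2quantum} then gives $O(\sqrt{GT}) = O(n^{3/2})$.

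For (i), I would run Algorithm~\ref{alg:DFS} on the DAG $\cG$, record the finishing times $ft[\cdot]$, and output the vertices in decreasing order of $ft$; correctness is the standard topological-sort fact that in a DAG, for every edge $(u,v)$ the finishing time of $u$ exceeds that of $v$. For (ii), I would run DFS on $\cG$ and output the vertex sets of the DFS trees as connected components. In both cases the uniform bookkeeping above gives $T \leq n(n-1)$ and $G \leq n-1$, so the quantum query complexity is $O(n^{3/2})$.

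For (iii), I would use Kosaraju's algorithm: first a DFS on $\cG$ to compute finishing times, then a DFS on the transpose graph $\cG^\top$ (processing vertices in decreasing order of finishing time), whose DFS trees are exactly the strongly connected components of $\cG$. A query to an entry of $\cG^\top$ is the same as a query to an entry of $\cG$ with its two indices swapped, so queries during the second DFS still count as queries to the adjacency matrix of $\cG$. Concatenating the two decision trees, the combined tree has depth $T = O(n^2)$ and, under the ``always predict $0$'' G-coloring, at most $2(n-1) = O(n)$ red edges on any root-to-leaf path, since each of the two DFS passes contributes at most $n-1$ tree edges. Theorem~\ref{thm:classical2quantum} again gives $O(n^{3/2})$.

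The only slightly delicate point is to verify that Algorithm~\ref{alg:DFS} really does query $(s,v)$ only when $v$ is undiscovered, so that ``number of red edges on the execution path of a fixed input $x$'' coincides with ``number of tree edges produced by DFS on $x$'', rather than with the potentially larger ``number of $1$-answers seen''. Once this is checked, the template above is the whole argument: pick DFS (or two DFSs for Kosaraju) as the classical algorithm, color every $1$-labeled edge red, and invoke Theorem~\ref{thm:classical2quantum} with $T = O(n^2)$, $G = O(n)$.
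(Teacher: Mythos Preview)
Your proposal is correct and matches the paper's proof essentially line for line: the same ``always predict $0$'' G-coloring, the same DFS-based classical algorithms (reverse finishing-time order for topological sort, DFS trees for components, Kosaraju's two-pass algorithm for SCCs), and the same bookkeeping $T=O(n^2)$, $G=O(n)$ (respectively $G\le 2n$ for part (iii)). The only cosmetic difference is that in (iii) the paper runs the first DFS on $\cG^R$ and the second on $\cG$, while you do the reverse; these are equivalent formulations of Kosaraju's algorithm.
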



The problem of topological sort is an important problem in large networks and job scheduling. There are several classical algorithms for this problem. The first algorithm is by Kahn~\cite{Kahn62}. In this algorithm at each step we add all vertices that do not have any incoming edges to the sorted list, and then eliminate them from the original graph. We continue this process until we add all vertices to the sorted list. Another algorithm for this problem, which we use in this proposition, is based on the DFS algorithm, first stated by Tarjan~\cite{Tarjan76}. Note that in these classical algorithms ones needs to read the entire input to discover the structure of the graph, so their query complexity is $O(n^2)$.
To the author's knowledge this proposition gives the first non-trivial quantum query complexity upper bound for the topological sort problem. The problem of finding (strongly) connected components of a (directed) graph has been first shown to have query complexity $\Theta(n^{3/2})$ in~\cite{DHHM04}.

\begin{proof}
\rm{(i)} For a classical algorithm for this problem,
run DFS and return vertices in their reverse of finishing time. For a G-coloring of the associated decision tree $\cT$, color every edge with label 0 black and every other edge red. Then as before there are at most $G= n$ red edges in every path from root to leaves of $\cT$. Also the depth of the decision tree is $T=n^2$. Thus we obtain the bound of $O(\sqrt{GT})=O(n^{3/2})$ on quantum query complexity of topological sort. 
\vspace{10pt}
\newline 
\rm{(ii)} We again use the DFS algorithm on $\cG$ and whenever the stack becomes empty a new connected component has been found. The G-coloring of the associated decision tree is as in part (i), and the bound of $O(n^{3/2})$ is derived similarly.
\vspace{10pt}
\newline 
\rm{(iii)} As a classical algorithm for this problem we use two DFS calls. In the first one we run the DFS algorithm on a reverse graph $\cG^R$ whose adjacency matrix is the transpose of the adjacency matrix of $\cG$, i.e., $(u,v)\in E(\cG^R)$ iff $(v,u)\in E(\cG)$. Observe that every query to $\cG^R$  is equivalent to a query to $\cG$. In the second one, the DFS will be run on the graph $\cG$ in the reverse finishing time ordering
\footnote{This is a reverse topological order of vertices of $\cG$. Therefore, a vertex at the end of this list is in a sink component.}
 of vertices from the first DFS run. Here we use the fact that if we start the DFS somewhere in a sink component\footnote{A sink component is a set of vertices $I\subseteq V(\cG)$ such that $\forall u\in I,v\in V(\cG)\setminus I$ we have $(u,v)\notin E(\cG)$.} then we exactly traverse that component. In the resulted DFS forest, vertices in every tree are in the same strongly connected component. 
For a G-coloring of the decision tree, we color every edge with label $0$ black and every other edges red, so that $G\leq 2n$. The depth of the decision tree is $T=n^2$. Therefore, the quantum query complexity of this problem is $O(n^{3/2})$. 
\end{proof}

The following corollary is a simple consequence of Corollary~\ref{cor:sparseProperty}.

\begin{corollary} The quantum query complexity of every graph property of a general graph\footnote{This applies to weighted, unweighted, directed or undirected graphs.} in the adjacency matrix model, is $O(n\sqrt{|E(\cG)|})$ which is faster than the trivial algorithm when $|E(\cG)|=o(n^2)$. In particular, every sparse graph property in the adjacency matrix model has quantum query complexity $O(n^{3/2})$.
\end{corollary}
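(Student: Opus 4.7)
The plan is to apply Corollary~\ref{cor:sparseProperty} directly, viewing the adjacency matrix as the input string. In the adjacency matrix model with $n$ vertices, the input $x \in \{0,1\}^N$ has $N = \binom{n}{2}$ bits in the undirected case and $N = n(n-1)$ in the directed case; in either situation $N = O(n^2)$. The trivial classical algorithm that simply reads all entries of the matrix decides any graph property, so the classical query complexity is $T \leq N = O(n^2)$.

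Next, I would instantiate the parameter $g$ from Corollary~\ref{cor:sparseProperty} by taking $q = 0$ (corresponding to the absence of an edge). With this choice, $r_0(x)$ is precisely the number of ones in the adjacency matrix, which is at most $2|E(\cG)|$ for undirected graphs (because each edge contributes two symmetric entries) and exactly $|E(\cG)|$ for directed graphs. In both cases $r_0(x) = O(|E(\cG)|)$, so $g \leq \min_{q} \max_x r_q(x) = O(|E(\cG)|)$ on the promised domain of graphs with the given edge count.

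Applying Corollary~\ref{cor:sparseProperty} then yields a quantum query complexity of
\[
O(\sqrt{gT}) \;=\; O\bigl(\sqrt{|E(\cG)|\cdot n^2}\bigr) \;=\; O\bigl(n\sqrt{|E(\cG)|}\bigr),
\]
which matches the claimed bound. In particular, if $|E(\cG)| = O(n)$ (the sparse case) this is $O(n^{3/2})$, establishing the ``in particular'' part of the statement.

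No real obstacle is expected: the corollary is already tailored to exactly this situation, and the only substantive step is identifying the right symbol $q$ (namely $0$) so that the count of ``unpredicted'' entries equals (up to a factor of two) the edge count. The one minor subtlety to mention is that the bound is to be read as an input-dependent or promise-based guarantee: the algorithm obtained from Corollary~\ref{cor:sparseProperty} is built from the classical decision tree plus the all-zeros guess, and its query complexity bound scales with the actual sparsity of the input graph (or, in the promise setting, with the promised upper bound on $|E(\cG)|$).
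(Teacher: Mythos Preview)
Your proposal is correct and is exactly the approach the paper intends: the corollary is stated in the paper as ``a simple consequence of Corollary~\ref{cor:sparseProperty},'' and your instantiation (input of size $O(n^2)$, guess $q=0$ so that $r_0(x)=O(|E(\cG)|)$, hence $\sqrt{gT}=O(n\sqrt{|E(\cG)|})$) is precisely that consequence. Your closing remark about reading the bound as a promise on $|E(\cG)|$ is the right way to interpret the statement, since $g$ in Corollary~\ref{cor:sparseProperty} is defined via a maximum over $D_f$.
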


The fact that any sparse graph property (particularly minor-closed graph properties) have quantum query complexity $O(n^{3/2})$ has been proven in~\cite{ChK10}.

\subsection{Graph properties in the adjacency list model}\label{sec:adjList}

In this subsection we present some bounds on the quantum query complexity of some graph properties when the underlying graph is given in the \emph{adjacency list model}. Let us first describe what we mean by this model.

In the adjacency list model we assume that the graph is given by an array of size $n(n-1)$ which for simplicity we think of it as a matrix of size $n\times (n-1)$. The $j$-th row of this matrix is a list of neighbors of the $j$-th vertex $v_j$ of the graph. Assume that $v_j$ has degree $d_{v_j}$. Then the first $d_{v_j}$ coordinates of the $j$-row contain the indices of the neighbors of $v_j$ (in some order), and the last $n-1-d_{v_j}$ coordinates are filled with a \emph{nil} symbol. See Figure~\ref{fig:BFS-graph-list} for an example. Any query in the adjacency list model corresponds to a pair $(v_j, i)$ with $i\leq n-1$. If $i\leq d_{v_j}$, then the output of this query is the $i$-th adjacent vertex of $v_j$ in $\cG$. If $i>d_{v_j}$, the output of this query is nil.
This model can also be defined for directed graphs similarly. The only difference is that the $j$-th row of the matrix contains vertices that can be reached from $v_j$ by a \emph{directed} edge.



\begin{figure}[ht]
\centering \includegraphics[scale=1]{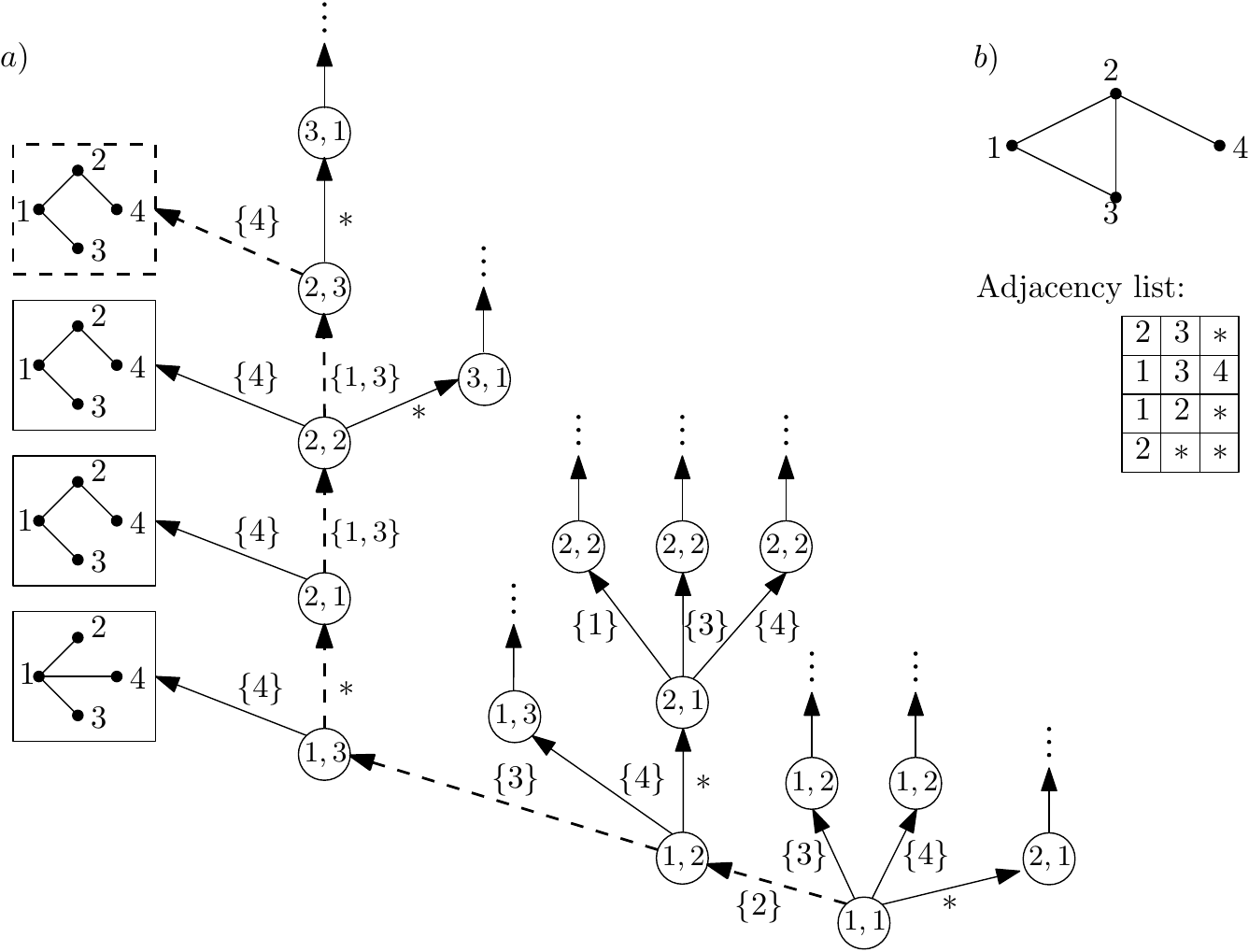}
\figcaption{$a)$  The decision tree for finding a BFS tree in a graph with 4 vertices. Each vertex of this decision tree is labeled by a pair which points to an entry of the matrix of adjacency list model. The nil symbol is represented by~$\ast$. The dashed path, is the path we take if the input to the BFS algorithm is the graph depicted on the right hand side.   $b)$ A graph and its adjacency list representation.}
\label{fig:BFS-graph-list}
\end{figure}

In the following we will use the BFS algorithm in the adjacency list model (see Algorithm~\ref{alg:BFSlist}) as a primitive to use Theorem~\ref{thm:classical2quantum}.  In the decision tree $\cT$ associated to this BFS algorithm, each node (query) corresponds to a pair $(v, i)$. The set of possible answers to such a query is the vertex set of $\cG$ which we partition as follows. We let $W(v, i)$ be the set of vertices that has been added to the BFS tree before querying $(v, i)$. The point is that the BFS algorithm is ignorant of the exact answer of the query $(v, i)$ once it makes sure that it belongs to $W(v, i)$ (see Figure~\ref{fig:BFS-graph-list} for an example). Thus in the decision tree $\cT$ we identify the outgoing edges of $(v, i)$ with labels in $W(v, i)$. All the other outgoing edges remain untouched. Now the G-coloring of $\cT$ is as follows: we color the outgoing edge of $(v, i)$ with label $W(v, i)$ black, and the rest of outgoing edges red. We note that there are $n$ vertices to be added to the BFS tree one-by-one, and we face a red edge once we add a new vertex or a nil. Then in total we see at most $G=O(n)$ red edges in every path from the root to leaves of $\cT$.
Also the total number of queries in the BFS algorithm equals the number of edges of $\cG$ denoted by $m=|E(\cG)|$ plus $n$. This is because as we do not know the degrees of vertices, we would stop querying neighbors of a vertex after seeing a nil symbol. This adds an extra query for every vertex. Thus $T=m+n$, and the quantum query complexity of finding the BFS tree in the adjacency list model is $O(\sqrt{GT}) =O(\sqrt{(m+n)n})$.

\begin{algorithm}[ht]
\caption{ BFS($\cG$): breadth first search algorithm on graph $\cG$ in adjacency list model}
\label{alg:BFSlist}
\begin{algorithmic}[1]
\State Let $W$ be a list of discovered vertices and $Q$ be a first in first out queue. \label{list-algline-queue}
\State $W\leftarrow \emptyset$, $Q\leftarrow \emptyset$, $E_\cS\leftarrow \emptyset$
\While{there exists a $v'\in V(\cG)\setminus L$}
	\State add $v'$ to $Q$ 
	\State $W\leftarrow W \cup \{v'\}$ 
	\While{$Q \neq \emptyset$}
		\State $u\leftarrow {\rm dequeue}(Q)$
		\State $v\leftarrow $Query $(u,1)$
		\State $i\leftarrow 2$
		\While{$v\neq$ nil}
			\State $v\leftarrow$ Query $(u,i)$ \Comment{returns the $i$-th neighbor of vertex $u$}
			\State $i\leftarrow i+1$
			\If {$v \in V(\cG)\setminus W$}
				\State add $(u,v)$ to $E_\cS$
				\State add $v$ to $Q$
				\State $P\leftarrow W \cup \{v\}$ 
			\EndIf
		\EndWhile
	\EndWhile
\EndWhile
\State \Return the BFS forest $\cS\big(V(\cG),E_\cS\big)$
\end{algorithmic}
\end{algorithm}

\begin{proposition}\label{pro:BFS-list-based} 
Suppose that the graph $\cG$ with $n$ vertices and $m$ edges is given via the adjacency list model. Then the following hold.
\begin{enumerate}
\item[\rm{(i)}] \textsc{[directed st-connectivity]}\label{ex:dir-st-con-list} Finding a shortest (directed or undirected) path between two vertices $s, t$ in $\cG$  has quantum query complexity $O\left(\sqrt{(m+n)n}\right)$.

\item[\rm{(ii)}] \textsc{[bipartitness]}\label{ex:bipartiteness-list} The quantum query complexity of deciding whether $\cG$ is bipartite or not is $O\left(\sqrt{(m+n)n}\right)$.

\item[\rm{(iii)}] \textsc{[maximum bipartite matching]} Assuming that $\cG$ is unweighted and bipartite, the quantum query complexity of finding a maximum bipartite matching in $\cG$ is $O\left(n^{3/4}\sqrt{m+n} \right).$ 

\item[\rm{(iv)}] \textsc{[topological sort]}\label{ex:topsort-list} Suppose that $\cG$ is acyclic. Then the quantum query complexity of finding a vertex ordering of $\cG$ such that for all $(u,v)\in E$, $u$ appears before $v$ is $O\left(\sqrt{(m+n)n}\right).$

\item[\rm{(v)}] \textsc{[connected components]}\label{ex:con-com-list} The quantum query complexity of determining connected components of $\cG$ is $O\left(\sqrt{(m+n)n}\right)$. 
 
\end{enumerate}
\end{proposition}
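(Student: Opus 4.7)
My plan is to apply Theorem~\ref{thm:classical2quantum} in each of the five parts by exhibiting a BFS- or DFS-based classical algorithm together with the G-coloring described in the paragraph preceding the proposition: at each query $(v,i)$, the outgoing edge whose label lies in the set $W(v,i)$ of previously-discovered vertices is colored black and all other outgoing edges red. For a single such traversal the decision tree has depth $T=O(m+n)$ and $G=O(n)$ red edges per root-to-leaf path.

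Parts~(i), (ii), (iv) and~(v) all follow from a single execution of Algorithm~\ref{alg:BFSlist} or its DFS analogue. A BFS from $s$ recovers shortest paths to every reachable vertex, from which an $s$-$t$ path is read off by backtracking parents, yielding~(i); the same BFS, two-colored by depth parity and rejecting as soon as it queries an edge joining two same-parity vertices, decides bipartiteness, yielding~(ii); the DFS of Algorithm~\ref{alg:DFS} adapted to the adjacency-list model and outputting vertices in reverse finishing-time order produces a topological sort when $\cG$ is acyclic, yielding~(iv); and the same DFS, opening a new component whenever the stack empties, returns the connected components, yielding~(v). In all four cases Theorem~\ref{thm:classical2quantum}(i) with $T=O(m+n)$ and $G=O(n)$ immediately gives $O\bigl(\sqrt{(m+n)n}\bigr)$.

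Part~(iii) is the substantive case. My plan is to run the classical Hopcroft--Karp algorithm, which computes a maximum bipartite matching in $O(\sqrt n)$ phases: each phase performs a BFS from the unmatched left-vertices to construct the layered residual graph, a DFS that extracts a maximal set of vertex-disjoint shortest augmenting paths, and the augmentation. Each phase is, from the query perspective, a BFS/DFS of the type analyzed in parts~(i)--(v), contributing $O(m+n)$ queries and $O(n)$ red edges under the same G-coloring. Summed over $O(\sqrt n)$ phases this gives $T=O(\sqrt n\,(m+n))$ and $G=O(n^{3/2})$, and the target is to translate these totals into $O\bigl(n^{3/4}\sqrt{m+n}\bigr)$ via Theorem~\ref{thm:class2quantumTg} rather than the plain bound of Theorem~\ref{thm:classical2quantum}(ii).

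I expect this last application to be the delicate step. Naively invoking Theorem~\ref{thm:classical2quantum}(ii) with $T=O(\sqrt n(m+n))$ and $G=O(n^{3/2})$ yields only $\sqrt{TG}=O(n\sqrt{m+n})$, losing an $n^{1/4}$ factor. To shave it, the plan is to exploit Theorem~\ref{thm:class2quantumTg}: within each Hopcroft--Karp phase the red edges are not spread uniformly (BFS-style new-vertex discoveries happen in bursts, followed by long runs of black edges corresponding to already-discovered vertices), so the segment-lengths $T_g$ between consecutive red edges are highly non-uniform. Combining this with the Hopcroft--Karp structural fact that augmenting paths grow in length---so the number of new discoveries per phase decreases as phases progress---should bound $\sum_g\sqrt{T_g}$ by $O\bigl(n^{3/4}\sqrt{m+n}\bigr)$, closing the argument.
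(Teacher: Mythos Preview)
Your treatment of parts (i), (ii), (iv) and (v) is correct and essentially identical to the paper's: a single BFS or DFS pass with the $W(v,i)$-based G-coloring gives $T=O(m+n)$ and $G=O(n)$, and Theorem~\ref{thm:classical2quantum}(i) yields the stated $O\bigl(\sqrt{(m+n)n}\bigr)$.

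For part~(iii), however, there is a genuine gap. You over-count the depth $T$: you sum $O(m+n)$ queries over $O(\sqrt n)$ Hopcroft--Karp phases to get $T=O\bigl(\sqrt n\,(m+n)\bigr)$, but a decision tree never queries the same input coordinate twice. Once the entry $(v,i)$ of the adjacency list has been read in some phase its value is known, and every later phase can use it without re-querying. Since there are only $O(m+n)$ distinct informative entries to read (the edge slots plus one terminating nil per vertex), the depth of the decision tree for the \emph{entire} Hopcroft--Karp run is $T=O(m+n)$, not $O\bigl(\sqrt n\,(m+n)\bigr)$. This is exactly how the paper proceeds: with $T=O(m+n)$ and the crude per-phase bound $G=O(n)\cdot O(\sqrt n)=O(n^{3/2})$, a direct application of Theorem~\ref{thm:classical2quantum}(i) gives $\sqrt{TG}=O\bigl(n^{3/4}\sqrt{m+n}\bigr)$. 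No appeal to Theorem~\ref{thm:class2quantumTg} is needed.

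Your proposed rescue via Theorem~\ref{thm:class2quantumTg} is therefore unnecessary, and as written it is also not a proof: the qualitative remarks about ``bursts'' of discoveries and augmenting-path lengths increasing across phases do not by themselves force $\sum_g\sqrt{T_g}$ below the concavity bound $\sqrt{G\cdot T}$, and you give no quantitative control on the non-uniformity of the $T_g$. The missing idea is simply the caching observation that collapses $T$ to $O(m+n)$.
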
 

Having query access to the adjacency list of a directed graph $\cG$, it has been proved in~\cite{DHHM04} that finding a minimum spanning tree of $\cG$ has quantum query complexity $O(\sqrt{mn})$. Using minimum spanning tree one can prove that checking directed st-connectivity and graph bipartiteness have quantum query complexity $O(\sqrt{mn})$ in the adjacency list model.
Lin and Lin~\cite{LL16} proved the upper bound of $O(n^{7/4})$ for the problem of maximum bipartite matching in the adjacency matrix model. Here using their ideas  we prove the first non-trivial upper bound for this problem in the adjacency list model.

\begin{proof}
\rm{(i)} To find a shortest path we run the BFS algorithm in the adjacency list model starting from the vertex $s$. Then $s$ and $t$ will be connected in the resulting spanning forest with their shortest path. As discussed before, the quantum query complexity of finding this BFS spanning forest is $O\left(\sqrt{(m+n)n}\right)$. Thus a shortest path between $s, t$ can be found with $O\left(\sqrt{(m+n)n}\right)$ quantum queries.
\vspace{10pt}
\newline 
\rm{(ii)} In the classical algorithm for this problem we start by finding a spanning tree on $\cG$ by running the BFS Algorithm~\ref{alg:BFSlist}. We then color vertices of $\cG$ using the resulting spanning forest $\cS$ with two colors blue and green. We color every vertex of $\cG$ with even depth in $\cS$ blue, and every vertex with odd depth in $\cS$ green. Then we search for two adjacent vertices in $\cG$ with the same color. If we find such an edge, the graph is not bipartite, and is bipartite otherwise. The G-coloring of the associated decision tree $\cT$ is as follows. In the first part that we run the BFS algorithm the G-coloring is as before. In the second part that we search for an edge between two vertices of the same color, we partition the set of possible answers (vertices of $\cG$) to in two parts: the set of blue vertices and the set of green vertices. As we query $(v, i)$, i.e., the $i$-th neighbor of $v$ in $\cG$, the color of the two outgoing edges associated to this query labeled by sets of blue and green vertices would be colored as follows: if $v$ is blue, the outgoing edge of blue vertices is colored red and the other one is colored black; if $v$ is green the outgoing edge of green vertices is colored red and the other one is colored black.  Observe that in the second part of the algorithm, once we see a red edge of $\cT$ the algorithm halts (and $\cG$ would not be bipartite). Thus in total we see at most $G=n$ red edges in any path from the root to leaves of $\cT$. 
On the other hand, the depth of the decision tree is $T=m+n$. Therefore, the quantum query complexity of this problem is $O\left(\sqrt{(m+n)n}\right)$.
\begin{algorithm}
\caption{Hopcroft-Karp algorithm for maximum bipartite matching on graph $\cG=(X\cup Y,E)$}
\label{alg:bipartite-matching}
\begin{algorithmic}[1]
\State $\mathcal{M}=\emptyset$ 
\Comment{$\mathcal{M}$ is an empty matching and will be updated until becoming a maximum matching} 
\While {$\mathcal{M}$ is not a maximum matching} \label{while-matching}
	\State define an auxiliary directed graph 			$\cG'=(V',E')$ as follows
	\begin{align*}
	E'=&\{(s,x)|x\in X,\forall y\in Y: (x,y)\notin 		\mathcal{M}\}\cup
	\{(y,t)|y\in Y, \forall x\in X: (x,y)\notin 			\mathcal{M}\} \\&
	\cup
	\{(x,y)|x\in X, y\in Y, (x,y)\notin \mathcal{M}\}
	\cup
	\{(y,x)|x\in X, y\in Y, (x,y)\in \mathcal{M}\}
	\\
	V'=& X\cup Y\cup\{s,t\}
	\end{align*}
	\Comment {any query to the adjacency list of $\cG'$ can be simulated using a query to the adjacency list of $\cG$}
	\State S= a maximal set of vertex disjoint shortest paths from $s$ to $t$ in $\cG'$
	 \Comment{ this can be found using one call to the algorithm~\ref{alg:BFSlist} in $\cG'$
}
	\If {$S=\emptyset$}
		\Return $\mathcal{M}$
	\Else
		\For {every path $(s,x_1,y_1,x_2,y_2,		\ldots,x_p,y_p,t)\in S$}
			\For {$i=1$ to $p-1$}
				\State $\mathcal{M}=\mathcal{M}-(x_{i+1},y_i)$
			\EndFor
			\For {$i=1$ to $p$}
				\State $\mathcal{M}=\mathcal{M}+(x_i,y_i)$
				\Comment{the size of $\mathcal{M}$ has been increased by 1}
			\EndFor
		\EndFor
	\EndIf
\EndWhile
\end{algorithmic}
\end{algorithm}
\vspace{10pt}
\newline 
\rm{(iii)} We use Algorithm~\ref{alg:bipartite-matching} by Hopcroft and Karp for maximum bipartite matching~\cite{HK70}.
In this algorithm we repeatedly increase the size of a partial matching $\mathcal{M}$ by finding augmenting paths in the graph. An augmenting path is a path with two end edges not in $\mathcal{M}$ and alternates between edges of the graph that belong to $\mathcal{M}$ and edges that do not. Swapping these edges from being in $\mathcal{M}$ to not being in $\mathcal M$ would increase the size of matching by one. However, instead of finding just an augmenting path in each iteration of the algorithm, it finds a maximal set of shortest vertex disjoint augmenting paths. After only $O(\sqrt{n})$ iterations, the maximum matching would be found.  
Since all queries to the input are made inside calls to the BFS Algorithm, the G-coloring of the associated decision tree, is as for BFS algorithm. There are $O(\sqrt{n})$ calls to BFS algorithm (Line~\ref{while-matching} in Algorithm~\ref{alg:bipartite-matching} repeats $O(\sqrt{n})$ times), so we have $G=n\sqrt{n}$ and the depth of the decision tree is $T=m+n$, where those $n$ extra queries are for the nils. Therefore, the quantum query complexity of this problem is $O\left(n^{3/4}\sqrt{(m+n)}\right)$. 
\vspace{10pt}
\newline 
\rm{(iv), (v)} The algorithms are similar to those of Proposition~\ref{pro:DFS-based} and the G-coloring is as above, so we skip the details. 
\end{proof}

%



%


\section{Concluding remarks}
In this paper we generalized a result of~\cite{LL16} that is a method for designing quantum query algorithms based on classical ones.  Our generalization of~\cite{LL16} is two-fold: first, we assume that the input alphabet of the function may be non-binary; second, we assume that in a decision tree the outgoing edges connected to a vertex may be indexed by subsets in a partition of the input alphabet set. These two enabled the possibility of using this method, in particular, for graph properties in the adjacency list model. Our proof of this generalization is based on span programs in the non-binary case as well as the dual adversary bound.

Let us at this stage review different approaches we have in proving Lin and Lin's results in~\cite{LL16} as well as Theorems~\ref{thm:classical2quantum} and~\ref{thm:class2quantumTg}:
\begin{itemize}
\item The first idea in~\cite{LL16} is to use the notion of bomb query complexity, which we did not mention here. It is an interesting question that whether this idea can be extended to prove our generalized results (Theorems~\ref{thm:classical2quantum} and~\ref{thm:class2quantumTg}).

\item The second idea in~\cite{LL16} is to use a modified version of Grover's search to find mistakes of the guessing algorithm. However, a naive application of Grover's search here results in an extra logarithmic factor for error reduction. 
It is shown in~\cite{LL16} that for functions with binary inputs this undesired factor can be eliminated using properties of the so called $\gamma_2$ norm. It seems plausible that the first part of Theorem~\ref{thm:classical2quantum} is provable by the same technique. However, it is not clear to us whether the second part of Theorem~\ref{thm:classical2quantum} or Theorem~\ref{thm:class2quantumTg} are achievable taking the same path.

\item The third idea is to use the notion of non-binary span program as we did for a proof of Theorem~\ref{thm:binaryClassical2quantum}. The idea is to use a ``st-connectivity type span program" (taken from~\cite{BR12}) in order to reach from the root of a decision tree to some leaf. However, to not end up with the trivial upper bound of $T$ (the depth of the decision tree) on the quantum query complexity, we equipped edges of the decision tree with some weights that are chosen based on a G-coloring.  Incorporating these weights in the span program the desired result was obtained. 

\item The last idea is to use the dual adversary bound. This approach is essentially the same as the approach of span programs, but with the advantage that it does not give an undesirable extra factor of $\sqrt{\ell-1}$ as explained before. Comparing to the first two methods, we believe that the ideas of using span programs and dual adversary bound are more advantageous since the choice of \emph{weights} in these approaches is arbitrary.   For proving  Theorem~\ref{thm:classical2quantum} the weights that we chose were among two possible choices. We then in Theorem~\ref{thm:class2quantumTg} showed how using a larger set of weights we may further improve the upper on the quantum query complexity.  Thus, a possible direction to extend our results is to further investigate other possible choices for the weights. 

\end{itemize}


One may suggest that our generalized non-binary version of the result of~\cite{LL16} can be obtained simply by representing non-binary inputs of $f:[\ell]^n\to [m]$ by binary strings, simulating a single non-binary query by $\log(\ell)$ binary ones, and then using the result of~\cite{LL16} in the binary case. Even ignoring the extra $\log(\ell)$ factor we obtain in this method, we argue that this approach does not work.  First, in our notion of generalized decision tree we allow to \emph{identify} some edges in the decision tree and label its edges with subsets of $[\ell]$. This is missing in the notion of decision tree in~\cite{LL16}. Identification of edges is a necessary part of our results especially in the examples of graph properties in the adjacency list model. To elaborate the second limitation of this approach, let us think of the example of minimum finding explained in Proposition~\ref{prop:min}. Suppose that $\ell=8$ and at some stage of the algorithm our candidate for minimum is 6 that is equal to $(1,1,0)$ in the binary representation. Then we read the first bit of the next number in the list and find it to be equal to 1. This means that the next number in the list is one of the numbers $4, 5, 6$ or $7$. In the algorithm and its associated G-coloring, there is a difference between 4, 5 and 6, 7 since the first two are smaller than 6. Indeed, in our proposed G-coloring edges 6,~7 are merged to a single edge with black color, while the edges 4 and 5 are colored in red. Therefore, to convert this coloring to a G-coloring in the binary decision tree whose edges are labeled by binary inputs, we have no choice but coloring the edge with label 1 by red. Then the parameter $G$ of the new $G$-coloring not only scales by a factor of $\log \ell$, but also is increased by something like $T-G$ because of such extra red edges.  
In summary, in order to use the result of~\cite{LL16} in the binary case to prove our generalized result in the non-binary case, we need to convert a G-coloring of a generalized non-binary decision tree to a G-coloring of a binary decision tree. It is now clear how this can be done in general without drastically weakening our bound on the parameter $G$.

Our results give bounds on the space complexity of our algorithms as well. The point is that the space complexity of a quantum algorithm based dual adversary bound, is bounded by the logarithm of the dimension of the vectors in the feasible solution of the dual adversary SDP. In our proofs the dimension of such feasible solutions is of the order of the size of the decision tree. Thus the space complexity of our algorithms equals the logarithm of the size of the corresponding decision tree. In particular, since in our examples (especially those for graph properties) the sizes of decision trees is exponential, the space complexity of our quantum algorithms is linear.\footnote{Note that although the size of the decision tree can be exponential, as in the examples in this paper, we do not need to explicitly build it. We usually have a classical algorithm which directly gives a decision tree. To use our results, we then only need to give a G-coloring.} 

 \color{black}
 
Prior to our work designing a span program based  quantum query algorithm for directed graphs was not an easy task. We eased the process of designing such algorithms by relating them to classical decision trees. Comparing to span programs for undirected graphs, however, the size of these span programs for directed graphs is exponential. 
 It would be interesting to see if we can decrease the space complexity of such quantum algorithms to logarithmic size.



\bibliography{references}

\newcommand{\etalchar}[1]{$^{#1}$}
\begin{thebibliography}{DHHM06}

\bibitem[AMRR11]{AMRR11}
Andris Ambainis, Lo{\"\i}ck Magnin, Martin Roetteler, and J{\'e}r{\'e}mie
  Roland.
\newblock Symmetry-assisted adversaries for quantum state generation.
\newblock In {\em 2011 IEEE 26th Annual Conference on Computational
  Complexity}, pages 167--177. IEEE, 2011.
\newblock \href {http://arxiv.org/abs/1012.2112} {\path{arXiv:1012.2112}},
  \href {https://doi.org/10.1109/CCC.2011.24} {\path{doi:10.1109/CCC.2011.24}}.

\bibitem[{\=A}ri15]{Ari15}
Agnis {\=A}ri{\c{n}}{\v{s}}.
\newblock Span-program-based quantum algorithms for graph bipartiteness and
  connectivity.
\newblock In {\em International Doctoral Workshop on Mathematical and
  Engineering Methods in Computer Science}, pages 35--41. Springer, 2015.
\newblock \href {http://arxiv.org/abs/1510.07825v1}
  {\path{arXiv:1510.07825v1}}, \href
  {https://doi.org/10.1007/978-3-319-29817-7_4}
  {\path{doi:10.1007/978-3-319-29817-7_4}}.

\bibitem[BHT]{BHT98}
Gilles Brassard, Peter H{\o}yer, and Alain Tapp.
\newblock Quantum counting.
\newblock In Kim~G. Larsen, Sven Skyum, and Glynn Winskel, editors, {\em
  Automata, Languages and Programming}, pages 820--831, Berlin, Heidelberg.
  Springer Berlin Heidelberg.
\newblock \href {http://arxiv.org/abs/9805082} {\path{arXiv:9805082}}, \href
  {https://doi.org/10.1007/BFb0055105} {\path{doi:10.1007/BFb0055105}}.

\bibitem[BR]{BR12}
Aleksandrs Belovs and Ben~W Reichardt.
\newblock Span programs and quantum algorithms for st-connectivity and claw
  detection.
\newblock In {\em Proceedings of the 20th Annual European conference on
  Algorithms (ESA 12)}, pages 193--204. Springer Berlin Heidelberg.
\newblock \href {http://arxiv.org/abs/1203.2603} {\path{arXiv:1203.2603}},
  \href {https://doi.org/10.1007/978-3-642-33090-2_18}
  {\path{doi:10.1007/978-3-642-33090-2_18}}.

\bibitem[BT19]{BT18}
Salman Beigi and Leila Taghavi.
\newblock {Span program for non-binary functions}.
\newblock {\em Quantum Information and Computation}, 19(9-10):760--792, 2019.
\newblock \href {http://arxiv.org/abs/1805.02714} {\path{arXiv:1805.02714}},
  \href {https://doi.org/10.26421/QIC19.9-10} {\path{doi:10.26421/QIC19.9-10}}.

\bibitem[Cir06]{Cira06}
Jill Cirasella.
\newblock {Classical and Quantum Algorithms for Finding Cycles}.
\newblock Master's thesis, University of Amsterdam, 2006.
\newblock URL:
  \url{http://www.illc.uva.nl/Research/Publications/Reports/MoL-2006-06.text.pdf}.

\bibitem[CK]{ChK10}
Andrew~M. Childs and Robin Kothari.
\newblock {Quantum query complexity of minor-closed graph properties}.
\newblock {\em SIAM Journal On Computing}.
\newblock \href {http://arxiv.org/abs/1011.1443} {\path{arXiv:1011.1443}},
  \href {https://doi.org/10.4230/LIPIcs.STACS.2011.661}
  {\path{doi:10.4230/LIPIcs.STACS.2011.661}}.

\bibitem[CMB]{CMB16}
Chris Cade, Ashley Montanaro, and Aleksandrs Belovs.
\newblock {Time and Space Efficient Quantum Algorithms for Detecting Cycles and
  Testing Bipartiteness}.
\newblock oct.
\newblock \href {http://arxiv.org/abs/1610.00581} {\path{arXiv:1610.00581}}.

\bibitem[DH]{Durr1996}
Christoph D\"urr and Peter H{\o}yer.
\newblock {A Quantum Algorithm for Finding the Minimum}.
\newblock \href {http://arxiv.org/abs/9607014} {\path{arXiv:9607014}}.

\bibitem[DHHM06]{DHHM04}
Christoph D{\"u}rr, Mark Heiligman, Peter H{\o}yer, and Mehdi Mhalla.
\newblock Quantum query complexity of some graph problems.
\newblock {\em SIAM Journal on Computing}, 35(6):1310--1328, 2006.
\newblock \href {http://arxiv.org/abs/0401091} {\path{arXiv:0401091}}, \href
  {https://doi.org/10.1137/050644719} {\path{doi:10.1137/050644719}}.

\bibitem[Gro]{Grover96}
Low~K Grover.
\newblock {A fast quantum mechanical algorithm for database search}.
\newblock {\em Proceedings, 28th Annual ACM Symposium on the Theory of
  Computing (STOC)}, pages 212--219.
\newblock \href {http://arxiv.org/abs/9605043} {\path{arXiv:9605043}}, \href
  {https://doi.org/10.1145/237814.237866} {\path{doi:10.1145/237814.237866}}.

\bibitem[HK70]{HK70}
John~E Hopcroft and Richard~M Karp.
\newblock {An $O(n^{2.5})$ algorithm for maximum matching in bipartite graphs}.
\newblock {\em SIAM Journal On Computing}, pages 122--125, 1970.
\newblock \href {https://doi.org/10.1137/0202019} {\path{doi:10.1137/0202019}}.

\bibitem[HL{\v S}07]{HLS07}
Peter H{\o}yer, Troy Lee, and Robert {\v S}palek.
\newblock Negative weights make adversaries stronger.
\newblock In {\em Proceedings of the thirty-ninth annual ACM symposium on
  Theory of computing}, pages 526--535. ACM, 2007.
\newblock \href {http://arxiv.org/abs/0611054} {\path{arXiv:0611054}}, \href
  {https://doi.org/10.1145/1250790.1250867}
  {\path{doi:10.1145/1250790.1250867}}.

\bibitem[IJ15]{ItoJeffery15}
Tsuyoshi Ito and Stacey Jeffery.
\newblock {Approximate Span Programs}.
\newblock In {\em 43rd International Colloquium on Automata, Languages, and
  Programming (ICALP 2016)}, 2015.
\newblock URL: \url{http://arxiv.org/abs/1507.00432}, \href
  {http://arxiv.org/abs/1507.00432} {\path{arXiv:1507.00432}}, \href
  {https://doi.org/10.4230/LIPIcs.ICALP.2016.12}
  {\path{doi:10.4230/LIPIcs.ICALP.2016.12}}.

\bibitem[Kah62]{Kahn62}
Arthur~B. Kahn.
\newblock Topological sorting of large networks.
\newblock {\em Communications of the ACM}, 5:558--562, 1962.
\newblock \href {https://doi.org/10.1145/368996.369025}
  {\path{doi:10.1145/368996.369025}}.

\bibitem[LL16]{LL16}
Cedric Yen-Yu Lin and Han-Hsuan Lin.
\newblock {Upper bounds on quantum query complexity inspired by the
  Elitzur-Vaidman bomb tester}.
\newblock {\em Theory of Computing}, 12:1--35, 2016.
\newblock URL: \url{https://theoryofcomputing.org/articles/v012a018/}, \href
  {http://arxiv.org/abs/1410.0932} {\path{arXiv:1410.0932}}, \href
  {https://doi.org/10.4086/toc.2016.v012a018}
  {\path{doi:10.4086/toc.2016.v012a018}}.

\bibitem[LMR{\etalchar{+}}11]{LMRSS11}
Troy Lee, Rajat Mittal, Ben~W Reichardt, Robert {\v S}palek, and Mario Szegedy.
\newblock Quantum query complexity of state conversion.
\newblock In {\em 2011 IEEE 52nd Annual Symposium on Foundations of Computer
  Science}, pages 344--353. IEEE, 2011.
\newblock \href {http://arxiv.org/abs/1011.3020} {\path{arXiv:1011.3020}},
  \href {https://doi.org/10.1109/FOCS.2011.75}
  {\path{doi:10.1109/FOCS.2011.75}}.

\bibitem[LMR{\v{S}}]{LMRS10}
Troy Lee, Rajat Mittal, Ben~W Reichardt, and Robert {\v{S}}palek.
\newblock {An adversary for algorithms}.
\newblock \href {http://arxiv.org/abs/1011.3020} {\path{arXiv:1011.3020}}.

\bibitem[Rei09]{Rei09}
Ben~W Reichardt.
\newblock Span programs and quantum query complexity: The general adversary
  bound is nearly tight for every boolean function.
\newblock In {\em 2009 50th Annual IEEE Symposium on Foundations of Computer
  Science}, pages 544--551. IEEE, 2009.
\newblock \href {http://arxiv.org/abs/0904.2759} {\path{arXiv:0904.2759}},
  \href {https://doi.org/10.1109/FOCS.2009.55}
  {\path{doi:10.1109/FOCS.2009.55}}.

\bibitem[R{\v S}12]{RS12}
Ben~W Reichardt and Robert {\v S}palek.
\newblock Span-program-based quantum algorithm for evaluating formulas.
\newblock volume~8, pages 291--319. Theory of Computing, 2012.
\newblock URL: \url{http://www.theoryofcomputing.org/articles/v008a013}, \href
  {http://arxiv.org/abs/0710.2630} {\path{arXiv:0710.2630}}, \href
  {https://doi.org/10.4086/toc.2012.v008a013}
  {\path{doi:10.4086/toc.2012.v008a013}}.

\bibitem[Shi02]{Shi02}
Yaoyun Shi.
\newblock {Quantum lower bounds for the collision and the element distinctness
  problems}.
\newblock In {\em Proceedings 43rd Annual IEEE Symposium on Foundations of
  Computer Science}, pages 513--519, 2002.
\newblock \href {http://arxiv.org/abs/0112086} {\path{arXiv:0112086}}, \href
  {https://doi.org/10.1109/SFCS.2002.1181975}
  {\path{doi:10.1109/SFCS.2002.1181975}}.

\bibitem[Tar76]{Tarjan76}
Robert~Endre Tarjan.
\newblock Edge-disjoint spanning trees and depth-first search.
\newblock {\em Acta Informatica}, 6(2):171--185, 1976.
\newblock \href {https://doi.org/10.1007/BF00268499}
  {\path{doi:10.1007/BF00268499}}.

\bibitem[Zha05]{Zhang05}
Shengyu Zhang.
\newblock {On the power of Ambainis lower bounds}.
\newblock {\em Theoretical Computer Science}, 339(2-3):241--256, 2005.
\newblock \href {http://arxiv.org/abs/0311060} {\path{arXiv:0311060}}, \href
  {https://doi.org/10.1016/j.tcs.2005.01.019}
  {\path{doi:10.1016/j.tcs.2005.01.019}}.

\end{thebibliography}
\bibliographystyle{alphaurl}


\end{document}